\def\href#1#2{#2}
\newcommand{\Note}[2]{}
\newcommand{\CSD}{\mathrm{CSD}}
\newcommand{\KLD}{\mathrm{KLD}}
\newcommand{\ie}{i.e.}
\newcommand{\eg}{e.g.}
\newcommand{\wrt}{with respect to}
\newcommand{\rset}{\mathbb{R}}
\newcommand{\adj}[1]{\psi_{N,#1}}
\newcommand{\adjopt}[1]{\psi^\ast_{N,#1}}
\newcommand{\adjfunc}{\Psi}
\newcommand{\adjfuncoptKL}{\Psi^\ast_{\mathrm{KL},\prop}}
\newcommand{\adjfuncoptchi}{\Psi^\ast_{\chi^2,\prop}}
\newcommand{\alg}{\mathcal{B}}
\newcommand{\argmax}{\operatornamewithlimits{arg\,max}}
\newcommand{\argmin}{\operatornamewithlimits{arg\,min}}
\newcommand{\auxinstr}{\pi_{N}^{\mathrm{aux}}}
\newcommand{\auxinstrparam}[1]{\pi^{\mathrm{aux}}_{N,#1}}
\newcommand{\auxtarg}{\mu_N^{\mathrm{aux}}}
\newcommand{\chitwo}{d_{\chi^2}}
\newcommand{\CV}{\ensuremath{\mathrm{CV}^2}}
\newcommand{\define}{\triangleq}
\newcommand{\distrn}{\lambda}
\newcommand{\E}{\mathbb{E}}
\newcommand{\entropy}{\ensuremath{\mathcal{E}}}
\newcommand{\eqsp}{\;}
\newcommand{\hk}{Q}
\newcommand{\ind}[2][]{I_{N,#2}\ifthenelse{\equal{#1}{}}{}{^{#1}}}
\newcommand{\indic}{\mathbbm{1}}
\newcommand{\instr}{\pi_N}
\newcommand{\KL}{d_{\mathrm{KL}}}
\newcommand{\Lp}[1]{\mathsf{L}^{#1}}
\newcommand{\meanopt}{\tau}
\newcommand{\meas}{\mathcal{P}}
\newcommand{\measfunc}{\mathbb{B}}
\newcommand{\noiseletter}{\epsilon}
\newcommand{\noise}[1]{\noiseletter_{#1}}
\newcommand{\normdens}[3]{\mathcal{N}(#1; #2, #3)}
\newcommand\param{%
	\bgroup\theta \paraminterne}
\newcommand\paraminterne[1][]{%
\ifthenelse{\equal{#1}{}}{}{_N^{#1}}\egroup}
\newcommand{\paramin}{\theta_N^\ast}
\newcommand{\paramsp}{\Theta}
\newcommand{\parti}[2][]{\xi_{#2} \ifthenelse{\equal{#1}{}}{}{^{#1}}}
\newcommand{\partitd}[2][]{\tilde{\xi}_{N,#2}\ifthenelse{\equal{#1}{}}{}{^{#1}}}
\newcommand{\prob}{\operatorname{\mathbb{P}}}
\newcommand\prop{%
\bgroup R\propinterne}
\newcommand\propinterne[1][]{%
\ifthenelse{\equal{#1}{}}{}{_{#1}}\egroup}
\newcommand{\propdens}[1][]{r\ifthenelse{\equal{#1}{}}{}{_{#1}}}
\newcommand{\propjoint}[1][]{\pi^\ast\ifthenelse{\equal{#1}{}}{}{_{#1}}}
\newcommand{\R}{\mathbb{R}}
\newcommand{\stdopt}{\eta}
\newcommand{\stsp}{{\boldsymbol{\Xi}}}
\newcommand{\stsptd}{{\tilde{\boldsymbol{\Xi}}}}
\newcommand{\targ}{\mu_N}
\newcommand{\targjoint}{\mu^\ast}
\newcommand{\ud}{\mathrm{d}}
\newcommand{\uk}{L}
\newcommand{\ukdens}{l}
\newcommand{\wgt}[1]{\omega_{#1}}
\newcommand{\wgtfunc}{\Phi}
\newcommand{\wgtfuncIS}[1][]{W \ifthenelse{\equal{#1}{}}{}{_{#1}}}
\newcommand{\compwgtfuncIS}[1][]{w \ifthenelse{\equal{#1}{}}{}{_{#1}}}
\newcommand{\wgttd}[2][]{\tilde{\omega}_{N,#2}\ifthenelse{\equal{#1}{}}{}{^{#1}}}
\newcommand{\wgtsum}{\Omega_N}
\newcommand{\wgtsumtd}[1][]{\tilde{\Omega}_N\ifthenelse{\equal{#1}{}}{}{^{#1}}}
\newcommand{\strans}{q}
\newcommand{\olik}{g}
\newcommand{\prior}{\pi_0}
\newcommand{\post}[2]{\phi_{#1|#2}}
\newcommand{\lhood}[1]{\ensuremath{\ell_{#1}}}
\newcounter{hyp}
\newenvironment{hyp}[1]{\refstepcounter{hyp}\it\begin{itemize}\item[{\bf
      (A\arabic{hyp})}] \label{hyp:#1}}{\end{itemize}}
\newcommand{\refhyp}[1]{{\bf (A\ref{hyp:#1})}}
\newtheorem{definition}{Definition}[section]
\newtheorem{lemma}{Lemma}[section]
\newtheorem{proposition}{Proposition}[section]
\newtheorem{remark}{Remark}[section]
\newtheorem{theorem}{Theorem}[section]
\numberwithin{equation}{section}
\title[Adaptive methods for sequential importance sampling]{Adaptive methods for sequential importance sampling with application to state space models
}
\author[J.~Cornebise]{Julien Cornebise}
\address[J.~Cornebise]{Institut des T\'el\'ecoms, T\'el\'ecom ParisTech \\
              46 Rue Barrault, 75634 Paris Cedex 13, France} 
\email{\href{mailto:julien.cornebise@telecom-paristech.fr}{julien.cornebise@telecom-paristech.fr}} \thanks{This work was partly supported by the
National Research Agency (ANR) under the program ``ANR-05-BLAN-0299''}
\author[\'E.~Moulines]{\'Eric Moulines}
\address[\'E.~Moulines]{Institut des T\'el\'ecoms, T\'el\'ecom ParisTech \\
              46 Rue Barrault, 75634 Paris Cedex 13, France} 
\email{\href{mailto:moulines@telecom-paristech.fr}{moulines@telecom-paristech.fr}}
\author[J.~Olsson]{Jimmy Olsson}
\address[J.~Olsson]{Center of Mathematical Sciences, Lund University \\  Box                  118, SE-22100 Lund, Sweden} 
\email{\href{mailto:jimmy@maths.lth.se}{jimmy@maths.lth.se}}
\date{July 2008}
\begin{document}
\begin{abstract}
In this paper we discuss new adaptive proposal strategies for sequential Monte Carlo algorithms---also known as particle filters---relying on criteria evaluating the quality of the proposed particles. The choice of the proposal distribution is a major concern and can dramatically influence the quality of the estimates. Thus, we show how the long-used coefficient of variation (suggested by \cite{kong:liu:wong:1994}) of the weights can be used for estimating the chi-square distance between the target and instrumental distributions of the auxiliary particle filter. As a by-product of this analysis we obtain an auxiliary adjustment multiplier weight type for which this chi-square distance is
minimal. Moreover, we establish an empirical estimate of linear complexity of the Kullback-Leibler divergence between the involved distributions. Guided by these results, we discuss adaptive designing of the particle filter proposal distribution and illustrate the methods on a numerical example.  \\
\end{abstract}

\keywords{Adaptive Monte Carlo, Auxiliary particle filter, Coefficient of variation, Kullback-Leibler divergence, Cross-entropy method, Sequential Monte Carlo, State space models}

\subjclass[2000]{Primary 65C05; Secondary 60G35}

\maketitle
\section{Introduction}

Easing the role of the user by tuning automatically the key
parameters of \emph{sequential Monte Carlo} (SMC) \emph{algorithms}
has been a long-standing topic in the community, notably through
adaptation of the particle sample size or the way the particles are sampled and weighted.
In this paper we focus on the latter issue and develop methods for adjusting adaptively the proposal distribution of the particle filter.

Adaptation of the number of particles has been trea\-ted by several authors.
In \protect{\cite{legland:oudjane:2006}} (and later \citet[Section
IV]{hu:schon:ljung:2008}) the size of the particle sample is increased until
the total weight mass reaches a positive threshold, avoiding a situation where all particles are located in regions of the state space having zero posterior probability. \citet[Section 3.2]{fearnhead:liu:2007} adjust the size of the particle cloud in order to control the error introduced by the resampling step.
Another approach, suggested by \cite{fox:2003} and refined in \citet{soto:2005} and \cite{straka:simandl:2006}, consists in increasing the sample size until the \emph{Kullback-Leibler divergence} (KLD) between the true and estimated target distributions is below a given threshold.

Unarguably, setting an appropriate sample size is a key ingredient of
any statistical estimation procedure, and there are cases where the
methods mentioned above may be used for designing satisfactorily this size; however increasing the sample size only is far from being always sufficient for achieving efficient variance reduction.
Indeed, as in any algorithm based on importance sampling, a
significant discrepancy between the proposal and target distributions
may require an unreasonably large number of samples for decreasing the
variance of the estimate under a specified value. For a very simple
illustration, consider importance sampling estimation of the mean $m$
of a normal distribution using as importance distribution another normal distribution having zero mean and same variance: in this case, the variance of the estimate grows like $\exp(m^2)/N$, $N$ denoting the
number of draws, implying that the sample size required for ensuring a
given variance grows exponentially fast with $m$.

This points to the need for adapting the importance distribution of the particle filter, e.g., by adjusting at each iteration the particle weights and the proposal distributions; see \eg\  \cite{doucet:godsill:andrieu:2000}, \cite{liu:2004}, and \cite{fearnhead:2008} for reviews of various filtering methods. These two quantities are critically important, since the performance of the particle filter is closely related to the ability of proposing particles in state space regions where the posterior is significant. It is well known that sampling using as proposal distribution the mixture composed by the current particle importance weights and the prior kernel (yielding the classical bootstrap particle filter of \cite{gordon:salmond:smith:1993}) is usually inefficient when the likelihood is highly peaked or located in the tail of the prior.

In the sequential context, the successive distributions to be approximated (\eg\ the successive filtering distributions) are the iterates of a nonlinear random mapping, defined on the space of probability measures; this nonlinear mapping may in general be decomposed into two steps: a prediction step which is linear and a nonlinear correction step which amounts to compute a normalisation factor. In this setting, an appealing way to update the current particle approximation consists in sampling new particles from the distribution obtained by propagating the current particle approximation through this mapping; see \eg\ \cite{huerzeler:kuensch:1998}, \cite{doucet:godsill:andrieu:2000}, and \cite{kuensch:2005} (and the references therein). This sampling distribution guarantees that the conditional variance of the importance weights is equal to zero. As we shall see below, this proposal distribution enjoys other optimality conditions, and is in the sequel referred to as the \emph{optimal sampling distribution}.
However, sampling from the optimal sampling distribution is, except for some specific models, a difficult and time-consuming task (the in general costly auxiliary accept-reject developed and analysed by \cite{kuensch:2005} being most often the only available option).

To circumvent this difficulty, several sub-optimal schemes have been proposed. A first type of approaches tries to mimic the behavior of the optimal sampling without suffering the sometimes prohibitive cost of rejection sampling. This typically involves localisation of the modes of the unnormalised optimal sampling distribution by means of some optimisation algorithm, and the fitting of over-dispersed student's $t$-distributions around these modes; see for example \cite{shephard:pitt:1997}, \cite{doucet:defreitas:gordon:2001}, and \cite{liu:2004} (and the references therein). Except in specific cases, locating the modes involves solving an optimisation problem for every particle, which is quite time-consuming.

A second class of approaches consists in using some classical approximate non-linear filtering tools such as the \emph{extended Kalman filter} (EKF) or the \emph{unscented transform Kalman filter} (UT/UKF); see for example \cite{doucet:defreitas:gordon:2001} and the references therein.  These techniques assume implicitly that the conditional distribution of the next state given the current state and the observation has a single mode. In the EKF version of the particle filter, the linearisation of the state and observation equations is carried out for each individual particle. Instead of linearising the state and observation dynamics using Jacobian matrices, the UT/UKF particle filter uses a deterministic sampling strategy to capture the mean and covariance with a small set of carefully selected points (\emph{sigma points}), which is also computed for each particle. Since these computations are most often rather involved, a significant computational overhead is introduced.

A third class of techniques is the so-called \emph{auxiliary particle filter} (APF)
suggested by \citet{pitt:shephard:1999}, who proposed it as a way to build data-driven proposal distributions (with the initial aim of robustifying standard SMC methods to the presence of outlying observations); see \eg\ \cite{fearnhead:2008}. The procedure comprises two stages: in the first-stage, the current particle weights are modified in order to select preferentially those particles being most likely proposed in regions where the posterior is significant. Usually this amounts to multiply the weights with so-called \emph{adjustment multiplier weights}, which may depend on the next observation as well as the current position of the particle and (possibly) the proposal transition kernels. Most often, this adjustment weight is chosen to estimate the predictive likelihood of the next observation given the current particle position, but this choice is not necessarily optimal.

In a second stage, a new particle sample from the target distribution is formed using this proposal distribution and associating the proposed particles with weights proportional to the inverse of the adjustment multiplier weight \footnote{The original APF proposed by \cite{pitt:shephard:1999} features a second resampling procedure in order to end-up with an equally weighted particle system. This resampling procedure might however severely reduce the accuracy of the filter: \cite{carpenter:clifford:fearnhead:1999} give an example where the accuracy is reduced by a factor of 2; see also \cite{douc:moulines:olsson:2007} for a theoretical proof.}. APF procedures are known to be rather successful when the first-stage distribution is appropriately chosen, which is not always straightforward. The additional computational cost depends mainly on the way the first-stage proposal is designed. The APF method can be mixed with EKF and UKF leading to powerful but computationally involved particle filter; see, \eg, \cite{andrieu:davy:doucet:2003}.

None of the suboptimal methods mentioned above minimise any sensible risk-theoretic criterion and, more annoyingly, both theoretical and practical evidences show that choices which seem to be intuitively correct may lead to performances even worse than that of the plain bootstrap filter (see for example \cite{douc:moulines:olsson:2007} for a striking example). The situation is even more unsatisfactory when the particle filter is driven by a state space dynamic different from that generating the observations, as happens frequently when, \eg, the parameters are not known and need to be estimated or when the model is misspecified.

Instead of trying to guess what a good proposal distribution should be, it seems sensible to follow a more risk-theoretically founded approach. The first step in such a construction consists in choosing a sensible risk criterion, which is not a straightforward task in the SMC context. A natural criterion for SMC would be the variance of the estimate of the posterior mean of a target function (or a set of target functions) of interest, but this approach does not lead to a practical implementation for two reasons. Firstly, in SMC methods, though closed-form expression for the variance at any given current time-step of the posterior mean of any function is available, this variance depends explicitly on all the time steps before the current time. Hence, choosing to minimise the variance at a given time-step would require to optimise all the simulations up to that particular time step, which is of course not practical.
Because of the recursive form of the variance, the minimisation of the conditional variance at each iteration of the algorithm does not necessarily lead to satisfactory performance on the long-run.
Secondly, as for the standard importance sampling algorithm, this criterion is not \emph{function-free}, meaning that a choice of a proposal can be appropriate for a given function, but inappropriate for another.

We will focus in the sequel on function-free risk criteria. A first criterion, advocated in \cite{kong:liu:wong:1994} and \cite{liu:2004} is the \emph{chi-square distance} (CSD) between the proposal and the target distributions, which coincides with the \emph{coefficient of variation} (\CV) of the importance weights. In addition, as heuristically discussed in \cite{kong:liu:wong:1994}, the CSD is related to the \emph{effective sample size}, which estimates the number of i.i.d. samples equivalent to the weighted particle system \footnote{In some situations, the estimated ESS value can be misleading: see the comments of \cite{stephens:donnelly:2000} for a further discussion of this.}.
In practice, the CSD criterion can be estimated, with a complexity that grows linearly with the number of particles, using the empirical \CV\, which can be shown to converge to the CSD as the number of particles tends to infinity. In this paper we show that a similar property still holds in the SMC context, in the sense that the \CV\ still measures a CSD between two distributions $\targjoint$ and $\propjoint$, which are associated with the proposal and target distributions of the particle filter (see Theorem~\ref{th:KL:chi2:convergence}(ii)). Though this result does not come as a surprise, it provides an additional theoretical footing to an approach which is currently used in practice for triggering resampling steps.

Another function-free risk criterion to assess the performance of importance sampling estimators is the KLD between the proposal and the target distributions; see \cite[Chapter 7]{cappe:moulines:ryden:2005}. The KLD shares some of the attractive properties of the CSD; in particular, the KLD may be estimated using the negated empirical \emph{entropy} \entropy\ of the importance weights, whose computational complexity is again linear in the number of particles. In the SMC context, it is shown in Theorem~\ref{th:KL:chi2:convergence}(i) that \entropy\ still converges to the KLD between the same two distributions $\targjoint$ and $\propjoint$ associated with the proposal and the target distributions of the particle filter.

Our methodology to design appropriate proposal distributions is based upon the minimisation of the CSD and KLD between the proposal and the target distributions. Whereas these quantities (and especially the CSD) have been routinely used to detect sample impoverishment and trigger the resampling step \citep{kong:liu:wong:1994}, they have not been used for adapting the simulation parameters in SMC methods. 

We focus here on the auxiliary sampling formulation of the particle filter. In this setting, there are two quantities to optimise: the adjustment multiplier weights (also called \emph{first-stage weights}) and the parameters of the proposal kernel; together these quantites define the mixture used as instrumental distribution in the filter. We first establish a closed-form expression for the limiting value of the CSD and KLD of the auxiliary formulation of the proposal and the target distributions. Using these expressions, we identify a type of auxiliary SMC adjustment multiplier weights which minimise the CSD and the KLD for a given proposal kernel (Proposition~\ref{prop:chi2:optimal:adjfunc}). We then propose several optimisation techniques for adapting the proposal kernels, always driven by the objective of minimising the CSD or the KLD, in coherence with what is done to detect sample impoverishment (see Section \ref{section:adaptive:importance:sampling}). Finally, in the implementation section (Section~\ref{section:implementations:to:state:space:models}), we use the proposed algorithms for  approximating the filtering distributions in several state space models, and show that the proposed optimisation procedure improves the accuracy of the particle estimates and makes them more robust to outlying observations.


\section{Informal presentation of the results}

\subsection{Adaptive importance sampling}
\label{subsec:AdaptiveImportanceSampling}
Before stating and proving rigorously the main results, we discuss informally our findings and introduce the proposed methodology for developing adaptive SMC algorithms. Before entering into the sophistication of sequential methods, we first briefly introduce adaptation of the standard (non-sequential) importance sampling algorithm.

\emph{Importance sampling} (IS) is a general technique to compute expectations of functions \wrt\ a target distribution with density $p(x)$ while only having samples generated from a different distribution---referred to as the \emph{proposal distribution}---with density $q(x)$ (implicitly, the dominating measure is taken to be the Lebesgue measure on $\stsp \define \rset^d$). We sample $\{ \parti{i} \}_{i=1}^{N}$ from the proposal distribution $q$ and compute the unnormalised importance weights $\wgt{i} \define \wgtfuncIS(\parti{i})$,  $i=1, \dots, N$, where $\wgtfuncIS(x) \define p(x)/q(x)$. For any function $f$, the self-normalised importance sampling estimator may be expressed as $\mathrm{IS}_N(f) \define \wgtsum^{-1} \sum_{i=1}^{N} \wgt{i} f(\parti{i})$, where $\wgtsum \define \sum_{j = 1}^{N} \wgt{j}$. As usual in applications of the IS methodology to Bayesian inference, the target density $p$ is known only
up to a normalisation constant; hence we will focus only on a self-normalised version of IS that solely requires the availability of an unnormalised version of $p$ \citep[see][]{geweke:1989}. Throughout the paper, we call a set $\{ \parti{i} \}_{i = 1}^{N}$ of random variables, referred to as \emph{particles} and taking values in $\stsp$, and nonnegative weights $\{ \wgt{i} \}_{i = 1}^{N}$ a \emph{weighted sample} on $\stsp$. Here $N$ is a (possibly random) integer, though we will take it fixed in the sequel. It is well known \citep[see again][]{geweke:1989} that, provided that $f$ is integrable \wrt\ $p$, \ie\ $\int |f(x)| p(x) \, \ud x < \infty$, $\mathrm{IS}_N(f)$ converges, as the number of samples tends to infinity, to the target value $$\E_p[f(X)] \define \int f(x) p(x) \, \ud x \eqsp,$$ for any function $f \in \mathsf{C}$, where $\mathsf{C}$ is the set of functions which are integrable \wrt\ to the target distribution $p$. Under some additional technical conditions, th
 is estimator is also asymptotically normal at rate $\sqrt{N}$; see \cite{geweke:1989}.

It is well known that IS estimators are sensitive to the choice of the proposal distribution. A classical approach consists in trying to minimise the asymptotic variance \wrt\ the proposal distribution $q$. This optimisation is in closed form and leads (when $f$ is a non-negative function) to the optimal choice $q^\ast(x) = f(x) p(x)/ \int f(x) p(x) \, \ud x$, which is, since the normalisation constant is precisely the quantity of interest, rather impractical. Sampling from this distribution can be done by using an accept-reject algorithm, but this does not solve the problem of choosing an appropriate proposal distribution. Note that it is possible to approach this optimal sampling distribution by using the \emph{cross-entropy method}; see \cite{rubinstein:kroese:2004} and \cite{deBoer:kroese:mannor:rubinstein:2005} and the references therein. We will discuss this point later on.

For reasons that will become clear in the sequel, this type of objective is impractical in the sequential context, since the expression of the asymptotic variance in this case is recursive and the optimisation of the variance at a given step is impossible. In addition, in most applications, the proposal density is expected to perform well for a range of typical functions of interest rather than for a specific target function $f$. We are thus looking for \emph{function-free} criteria. The most often used criterion is the CSD between the proposal distribution $q$ and the target distribution $p$, defined as
\begin{align}
\label{eq:definition:chitwo}
\chitwo(p || q) &= \int \frac{\{p(x)-q(x)\}^2}{q(x)} \, \ud x \eqsp, \\
\label{eq:definition:chitwo-1}
                &= \int \wgtfuncIS^2(x) q(x) \, \ud x - 1 \eqsp, \\
\label{eq:definition:chitwo-2}
                &= \int \wgtfuncIS(x) p(x) \, \ud x - 1 \eqsp.
\end{align}
The CSD between $p$ and $q$ may be expressed as the variance of the importance weight function $\wgtfuncIS$ under the proposal distribution, \ie
$$
\chitwo(p || q) = \mathrm{Var}_q[\wgtfuncIS(X)] \eqsp.
$$
This quantity can be estimated by computing the squared coefficient of variation of the unnormalized weights \citep[Section 4]{evans:swartz:1995}:
\begin{equation}
\label{eq:definition:CV}
\CV \big( \left\{ \wgt{i} \right\}_{i=1}^{N} \big) \define N \wgtsum^{-2} \sum_{i=1}^{N} \wgt{i}^2 - 1 \eqsp.
\end{equation}
The $\CV$ was suggested by \cite{kong:liu:wong:1994} as a means for detecting weight degeneracy. If all the weights are equal, then $\CV$ is equal to zero. On the other hand, if all the weights but one are zero, then the coefficient of variation is equal to $N - 1$ which is its maximum value.
From this it follows that using the estimated coefficient of variation for assessing accuracy is equivalent to examining the normalised importance weights to determine if any are relatively large \footnote{Some care should be taken for small sample sizes $N$; the \CV\ can be low because $q$ sample only over a subregion where the integrand is nearly constant, which is not always easy to detect.}.
\cite{kong:liu:wong:1994} showed that the coefficient of variation of the weights $\CV \left( \{ \wgt{i} \right\}_{i=1}^{N} )$ is related to the \emph{effective sample size} (ESS), which is used for measuring the overall efficiency of an IS algorithm:
\begin{equation*}
N^{-1} \mathrm{ESS} \big( \left\{ \wgt{i} \right\}_{i=1}^{N} \big) \define \frac{1}{1 + \CV \big( \left\{ \wgt{i} \right\}_{i=1}^{N} \big)} \to \left\{ 1 + \chitwo(p || q) \right\}^{-1} \eqsp.
\end{equation*}
Heuristically, the ESS measures the number of i.i.d. samples (from $p$) equivalent to the $N$ weighted samples. The smaller the CSD between the proposal and target distributions is, the larger is the ESS. This is why the CSD is of particular interest when measuring efficiency of IS algorithms.

Another possible measure of fit of the proposal distribution is the KLD (also called \emph{relative entropy}) between the proposal and target distributions, defined as
\begin{align}
\label{eq:definition:KL}
\KL( p || q) &\define \int p(x) \log \left( \frac{p(x)}{q(x)} \right) \, \ud x \eqsp, \\  
\label{eq:definition:KL-2}
             &= \int p(x)  \log \wgtfuncIS(x) \, \ud x \eqsp, \\
\label{eq:definition:KL-1}
             &= \int \wgtfuncIS(x) \log \wgtfuncIS(x) \, q(x) \, \ud x \eqsp. 
\end{align}
This criterion can be estimated from the importance weights using the negative \emph{Shannon entropy} \entropy\ of the importance weights:
\begin{equation}
\label{eq:definition:entropy}
\entropy \big( \left\{ \wgt{i} \right\}_{i=1}^{N} \big)  \define \wgtsum^{-1} \sum_{i=1}^{N} \wgt{i} \log \left( N \wgtsum^{-1} \wgt{i} \right) \eqsp.
\end{equation}
The Shannon entropy is maximal when all the weights are equal and minimal when all weights are zero but one. In IS (and especially for the estimation of rare events), the KLD between the proposal and target distributions was thoroughly investigated by \cite{rubinstein:kroese:2004}, and is central in the \emph{cross-entropy} (CE) methodology.

Classically, the proposal is chosen from a family of densities $q_\param$
parameterised by $\param$. Here $\param$ should be thought of as an element of
$\paramsp$, which is a subset of $\rset^k$. The most classical example is the
family of student's $t$-distributions parameterised by mean and covariance.
More sophisticated parameterisations, like mixture of multi-dimensional
Gaussian or Student's $t$-distributions, have been proposed; see, \eg,
\cite{oh:berger:1992}, \cite{oh:berger:1993}, \cite{evans:swartz:1995},
\cite{givens:raftery:1996}, \citet[Chapter 2, Section 2.6]{liu:2004}, and, more
recently, \cite{cappe:douc:guillin:marin:robert:2008} in this issue. In the
sequential context, where computational efficiency is a must, we typically use
rather simple parameterisations, so that the two criteria above can be
(approximatively) solved in a few iterations of a numerical minimisation
procedure.

The optimal parameters for the CSD and the KLD  are those minimising $\param \mapsto \chitwo(p ||q_\param)$ and $\param \mapsto \KL(p || q_\param)$, respectively. In the sequel, we denote by $\param^\ast_{\CSD}$ and $\param^\ast_{\KLD}$ these optimal values.
Of course, these quantities 
cannot be computed in closed form (recall that even the normalisation constant of $p$ is most often unknown; even if it is known, the evaluation of these quantities would involve the evaluation of most often high-dimensional integrals). Nevertheless, it is possible to construct consistent estimators of these optimal parameters. There are two classes of methods, detailed below.

The first uses the fact that the the CSD $\chitwo(p || q_\param)$ and the KLD $\KL(p | q_\param)$ may be approximated by \eqref{eq:definition:CV} and \eqref{eq:definition:entropy}, substituting
in these expressions the importance weights by $\wgt{i}= \wgtfuncIS[\param](\parti[\param]{i})$, $i=1,\dots,N$, where $\wgtfuncIS[\param] \define p / q_\param$ and $\{\parti[\param]{i}\}_{i=1}^{N}$ is a sample from $q_\param$. This optimisation problem formally shares some similarities with the classical minimum chi-square or maximum likelihood estimation, but with
the following important difference: the integrations in \eqref{eq:definition:chitwo} and \eqref{eq:definition:KL} are \wrt\ the proposal distribution $q_\param$ and not the target distribution $p$. As a consequence, the particles $\{ \parti[\param]{i} \}_{i=1}^{N}$ in the definition of the coefficient of variation \eqref{eq:definition:CV} or the entropy \eqref{eq:definition:entropy} of the weights constitute a sample from $q_\param$ and not from the target distribution $p$. As the estimation progresses, the samples used to approach the limiting CSD or KLD can, in contrast to standard estimation procedures, be updated (these samples could be kept fixed, but this is of course inefficient).

The computational complexity of these optimisation problems depends on the way the proposal is parameterised and how the optimisation procedure is implemented. Though the details of the optimisation procedure is in general strongly model dependent, some common principles for solving this optimisation problem can be outlined. Typically, the optimisation is done recursively, \ie\ the algorithm defines a sequence  $\param_{\ell}$, $\ell=0,1,\dots$, of parameters, where $\ell$ is the iteration number. At each iteration, the value of $\param_{\ell}$ is updated by computing a direction $p_{\ell+1}$ in which to step, a step length $\gamma_{\ell+1}$, and setting
\[
\param_{\ell+1} = \param_\ell + \gamma_{\ell+1} p_{\ell+1} \eqsp.
\]
The search direction is typically computed using either Monte Carlo approximation of the finite-difference or (when the quantities of interest are sufficiently regular) the gradient of the criterion. These quantities are used later in conjunction with classical optimisation strategies for computing the step size $\gamma_{\ell+1}$ or normalising the search direction. These implementation issues, detailed in Section~\ref{section:implementations:to:state:space:models}, are model dependent. We denote by $M_{\ell}$ the number of particles used to obtain such an approximation at iteration $\ell$. The number of particles may vary with the iteration index; heuristically there is no need for using a large number of simulations during the initial stage of the optimisation. Even rather crude estimation of the search direction might suffice to drive the parameters towards the region of interest. However, as the iterations go on, the number of simulations should be increased to avoid ``zi
 g-zagging'' when the algorithm approaches convergence. After $L$ iterations, the total number of generated particles is equal to $N = \sum_{\ell=1}^L M_\ell$. Another solution, which is not considered in this paper, would be to use a stochastic approximation procedure, which consists in fixing $M_\ell= M$ and letting the stepsize $\gamma_\ell$ tend to zero. This appealing solution has been successfully used in \cite{arouna:RM:2004}. 
  
The computation of the finite difference or the gradient, being defined as expectations of functions depending on $\param$, can be performed using two different approaches. Starting from definitions \eqref{eq:definition:chitwo-2} and \eqref{eq:definition:KL-2}, and assuming appropriate regularity conditions, the gradient of $\param \mapsto \chitwo(p || q_\param)$ and $\param \mapsto \KL(p || q_\param)$ may be expressed as
\begin{align} 
G_{\CSD}(\param) &\define \nabla_\param \chitwo(p||q_\param) = \int p(x) \nabla_\param \wgtfuncIS[\param](x) \, \ud x 
= \int q_\param(x) \wgtfuncIS[\param](x) \nabla_\param \wgtfuncIS[\param](x) \, \ud x \eqsp,
\label{eq:gradient-CSD} \\
G_{\KLD}(\param) &\define \nabla_\param \KL(p||q_\param) = \int p(x) \nabla_\param \log[ \wgtfuncIS[\param](x) ] \, \ud x  
= \int q_\param(x)   \nabla_\param  \wgtfuncIS[\param](x)  \, \ud x \eqsp. \label{eq:gradient-KLD}
\end{align}
These expressions lead immediately to the following approximations,
\begin{align}
\label{eq:gradient-CSD-empirical}
\hat{G}_{\CSD}(\param) &= M^{-1} \sum_{i=1}^{M} \wgtfuncIS[\param](\parti[\param]{i}) \nabla_\param \wgtfuncIS[\param](\parti[\param]{i}) \eqsp, \\
\label{eq:gradient-KLD-empirical}
\hat{G}_{\KLD}(\param) &= M^{-1} \sum_{i=1}^{M} \nabla_\param  \wgtfuncIS[\param_{\ell}](\parti[\param_\ell]{i}) \eqsp.
\end{align}
There is another way to compute derivatives, which shares some similarities with \emph{pathwise derivative estimates}. Recall that for any $\param \in \paramsp$, one may choose $F_\param$ so that the random variable $F_\param(\noiseletter)$, where  $\noiseletter$ is a vector of independent uniform random variables on $[0,1]^d$, is distributed according to $q_\param$. Therefore, we may express $\param \mapsto \chitwo(p || q_\param)$ and $\param \mapsto \KL(p || q_\param)$ as the following integrals,
\begin{align*}
& \chitwo(p || q_\param)= \int_{[0,1]^d} \compwgtfuncIS[\param](x) \, \ud x \eqsp, \\
& \KL(p || q_\param)= \int_{[0,1]^d} \compwgtfuncIS[\param](x) \log \left[ \compwgtfuncIS[\param](x) \right] \ud x \eqsp,
\end{align*}
where $\compwgtfuncIS[\param](x) \define \wgtfuncIS[\param] \circ F_\param(x)$. Assuming appropriate regularity conditions (\ie\ that $\param \mapsto \wgtfuncIS[\param] \circ F_\param(x)$ is differentiable and that we can interchange the integration and the differentiation), the differential of these quantities \wrt\ $\param$ may be expressed as
\begin{align*}
G_{\CSD}(\param) &= \int_{[0,1]^d} \nabla_\param  \compwgtfuncIS[\param]  (x) \, \ud x \eqsp,\\
G_{\KLD}(\param) &= \int_{[0,1]^d} \left\{ \nabla_\param  \compwgtfuncIS[\param](x) \log [ \compwgtfuncIS[\param] (x) ]  + \nabla_\param \compwgtfuncIS[\param] (x) \right\} \ud x \eqsp.
\end{align*}
For any given $x$, the quantity $\nabla_\param \compwgtfuncIS[\param] (x)$ is the pathwise derivative of the function $\param \mapsto \compwgtfuncIS[\param] (x)$.
As a practical matter, we usually think of each $x$ as a realization of of the output of an ideal random generator. Each $\compwgtfuncIS[\param](x)$ is then the output of the simulation algorithm at parameter $\param$ for the random number $x$. Each $\nabla_\param \compwgtfuncIS[\param](x)$ is the derivative of the simulation output \wrt\ $\param$ with the random numbers held fixed. These two expressions, which of course coincide with \eqref{eq:gradient-CSD} and \eqref{eq:gradient-KLD}, lead to the following estimators,
\begin{align*}
\tilde{G}_{\CSD}(\param) &= M^{-1} \sum_{i=1}^M \nabla_\param  \compwgtfuncIS[\param]  (\noise{i}) \eqsp, \\
\tilde{G}_{\KLD}(\param) &= M^{-1} \sum_{i=1}^M \left\{ \nabla_\param  \compwgtfuncIS[\param](\noise{i}) \log [ \compwgtfuncIS[\param] (\noise{i}) ]  + \nabla_\param \compwgtfuncIS[\param] (\noise{i}) \right\} \eqsp,
\end{align*}
where  
each element of the sequence $\{ \noise{i} \}_{i=1}^{M}$ is a vector on $[0,1]^d$ of independent uniform random variables. It is worthwhile to note that if the number $M_\ell= M$ is kept fixed during the iterations and the uniforms $\{ \noise{i} \}_{i=1}^M$ are drawn once and for all (\ie\ the same uniforms are used at the different iterations), then the iterative algorithm outlined above solves the following problem:
\begin{align}
\label{eq:optimisation-program-CV}
& \param \mapsto \CV\left( \left\{ \compwgtfuncIS[\param] (\noise{i}) \right\}_{i=1}^{M} \right) \eqsp, \\
\label{eq:optimisation-program-entropy}
& \param \mapsto \entropy\left( \left\{ \compwgtfuncIS[\param] (\noise{i}) \right\}_{i=1}^{N} \right) \eqsp.
\end{align}
From a theoretical standpoint, this optimisation problem is very similar to $M$-estimation, and convergence results for $M$-estimators can thus be used under rather standard technical assumptions; see for example \cite{vandervaart:1998}. This is the main advantage of fixing the sample $\{ \noise{i} \}_{i=1}^M$. We use this implementation in the simulations.

Under appropriate conditions, the sequence of estimators $\param^\ast_{\ell,\CSD}$ or $\param^\ast_{\ell,\KLD}$ of these criteria converge, as the number of iterations tends
to infinity, to $\param_{\CSD}^\ast$ or $\param_{\KLD}^\ast$ which minimise the criteria $\param \mapsto \chitwo(p || q_\param)$ and $\param \mapsto \KL( p || q_\param)$, respectively; these theoretical issues are considered in a companion paper.

The second class of approaches considered in this paper is used for minimising the KLD \eqref{eq:optimisation-program-entropy} and is inspired by the cross-entropy method.
This algorithm approximates the minimum $\param_{\mathrm{KLD}}^\ast$ of \eqref{eq:optimisation-program-entropy} by a sequence of pairs of steps, where each step of each pair addresses a simpler optimisation problem. Compared to the previous method, this algorithm is derivative-free and does not require to select a step size. It is in general simpler to implement and avoid most of the common pitfalls of stochastic approximation. Denote by $\param_{0} \in \paramsp$ an initial value. We define recursively the sequence $\{ \param_{\ell} \}_{\ell \geq 0}$ as follows. 
In a first step, we draw a sample $\{\parti[\param_\ell]{i}\}_{i=1}^{M_\ell}$  and evaluate the function
\begin{align}
\label{eq:KL-auxiliaryfunction}
\param \mapsto Q_{\ell}(\param,\param_{\ell}) \define \sum_{i=1}^{M_{\ell}}  \wgtfuncIS[\param_\ell](\parti[\param_\ell]{i})  \log q_{\param} (\parti[\param_\ell]{i}) \eqsp.
\end{align}
In a second step, we choose  $\param_{\ell+1}$ to be the (or any, if there are several) value of $\param \in \paramsp$ that maximises $Q_{\ell}(\param,\param_{\ell})$. As above, the number of particles $M_{\ell}$ is increased during the successive iterations. This procedure ressembles closely the Monte Carlo EM \citep{wei:tanner:1991} for maximum likelihood in incomplete
data models.
The advantage of this approach is that the solution of the maximisation problem $\param_{\ell+1}= \mathrm{argmax}_{\param \in \paramsp} \in  Q_{\ell}(\param,\param_{\ell})$ is often on closed form. In particular, this happens if the distribution $q_\param$ belongs to an \emph{exponential family} (EF) or
is a mixture of distributions of NEF; see \cite{cappe:douc:guillin:marin:robert:2008} for a discussion.
The convergence of this algorithm can be established along the same lines as the convergence of the MCEM algorithm; see \cite{fort:moulines:2003}. As the number of iterations $\ell$ increases, the sequence of estimators $\theta_{\ell}$ may be shown to converge to $\param^\ast_{\KLD}$. These theoretical results are established in a companion paper.

\subsection{Sequential Monte Carlo Methods}
\label{subsec:sequential-Monte-Carlo}
In the sequential context, where the problem consists in simulating from a
\emph{sequence} $\{ p_k \}$ of probability density function, the situation is more
difficult. Let $\stsp_k$ be denote the state space of distribution $p_k$ and note
that this space may vary with $k$, \eg\ in terms of increasing dimensionality. In
many applications, these densities are related to each other by a (possibly random)
mapping, \ie\
$p_{k} = \Psi_{k-1}(p_{k-1})$. In the sequel we focus on the case where there exists
a non-negative function $\ukdens_{k-1}: (\xi,\tilde{\xi}) \mapsto
\ukdens_{k-1}(\xi,\tilde{\xi})$ such that
\begin{equation}
\label{eq:generic-flow}
p_{k}(\tilde{\xi})= \frac{\int \ukdens_{k-1}(\xi,\tilde{\xi}) p_{k-1}(\xi) \, \ud \xi}{\int p_{k-1}(\xi) \int \ukdens_{k-1}(\xi,\tilde{\xi}) \, \ud \tilde{\xi} \, \ud \xi} \eqsp.
\end{equation}
As an example, consider the following generic nonlinear dynamic system described in state space form:
\begin{itemize}
\item \emph{State (system) model}
  \begin{equation}
    \label{eq:state-equation}
    X_k = a( X_{k-1}, U_k) \leftrightarrow  \stackrel{\text{Transition Density}}{\overbrace{\strans(X_{k-1}, X_{k})}} \eqsp,
  \end{equation}
\item \emph{Observation (measurement) model}
  \begin{equation}
    \label{eq:measurement-equation}
    Y_k = b(X_k, V_k) \leftrightarrow \stackrel{\text{Observation Density}}{\overbrace{\olik(X_k, Y_k)}} \eqsp.
  \end{equation}
\end{itemize}
By these equations we mean that each hidden state $X_k$ and data
$Y_k$ are assumed to be generated by nonlinear functions $a(\cdot)$ and $b(\cdot)$,
respectively, of the state and observation noises
$U_k$ and $V_k$. The state and the observation noises $\{ U_k \}_{k \geq 0}$ and $\{
V_k \}_{k \geq 0}$ are assumed to be mutually independent sequences of i.i.d. random
variables. The precise form of the functions and the assumed probability
distributions of the state and observation noises $U_k$ and $V_k$ imply, via a
change of variables, the transition probability density function $\strans(x_{k-1},
x_k)$ and the
observation probability density function $\olik(x_k, y_k)$, the latter being
referred to as the \emph{likelihood of the observation}. With these definitions, the
process $\{ X_k \}_{k \geq 0}$ is Markovian, \ie \ the  conditional probability
density of $X_k$ given the past states $X_{0:k-1} \define (X_0,
\dots, X_{k-1})$ depends exclusively on $X_{k-1}$. This distribution is described by
the density $\strans(x_{k-1}, x_k)$. In addition, the conditional probability
density of $Y_k$ given the states $X_{0:k}$ and the past observations $Y_{0:k-1}$
depends exclusively on $X_k$, and this distribution is captured by the likelihood
$\olik(x_k, y_k)$. We assume further that the initial state $X_0$ is distributed
according to a density function $\prior(x_0)$. Such nonlinear dynamic systems arise
frequently in many areas of science and engineering such as target tracking,
computer vision, terrain referenced navigation, finance, pollution monitoring,
communications, audio engineering, to list only a few.

Statistical inference for the general nonlinear dynamic system above involves
computing the \emph{posterior distribution} of a collection of state variables
$X_{s:s'} \define (X_s, \dots, X_{s'})$
conditioned on a batch $Y_{0:k} = (Y_0, \dots,
Y_k)$ of observations. We denote this posterior distribution by
$\post{s:s'}{k}(X_{s:s'} | Y_{0:k})$. Specific problems include
\emph{filtering}, corresponding to $s = s' = k$,
\emph{fixed lag smoothing}, where $s = s'= k-L$, and \emph{fixed interval smoothing}, with
$s=0$ and $s'=k$. Despite the apparent simplicity of the above problem, the
posterior distributions can be computed in
closed form only in very specific cases, principally, the linear
Gaussian model (where the functions $a(\cdot)$ and $b(\cdot)$ are linear and
the state and observation noises $\{ U_k \}_{k \geq 0}$ and $\{ V_k \}_{k \geq 0}$
are Gaussian) and the discrete \emph{hidden Markov model} (where $X_k$ takes its
values in a finite alphabet). In the vast majority of cases, nonlinearity or
non-Gaussianity render analytic solutions intractable---see
\cite{anderson:moore:1979,kailath:sayed:hassibi:2000,ristic:arulampalam:gordon:2004,cappe:moulines:ryden:2005}.

Starting with the initial, or prior, density function $\prior(x_0)$, and observations $Y_{0:k} = y_{0:k}$, the posterior density $\post{k}{k}(x_{k}|y_{0:k})$ can be obtained using
the following \emph{prediction}-\emph{correction}
recursion~\citep{ho:lee:1964}:
\begin{itemize}
\item \emph{Prediction}
  \begin{equation}
    \label{eq:prediction}
    \post{k}{k-1}(x_{k}|y_{0:k-1}) =
    \post{k-1}{k-1}(x_{k-1}|y_{0:k-1}) \strans(x_{k-1},x_k) \eqsp ,
  \end{equation}
\item \emph{Correction}
  \begin{equation}
    \label{eq:correction}
    \post{k}{k}(x_{k}|y_{0:k}) = \frac{\olik(x_k,y_k) \post{k}{k-1}(x_{k}|y_{0:k-1})}{\lhood{k|k-1}(y_k|y_{0:k-1})} \eqsp ,
  \end{equation}
\end{itemize}
where $\lhood{k|k-1}$ is the predictive distribution of $Y_k$ given the past
observations $Y_{0:k-1}$. For a fixed data realisation, this term is a normalising
constant (independent of the state) and is thus not necessary to compute in standard implementations of SMC methods.

By setting $p_k = \post{k}{k}$, $p_{k-1}= \post{k-1}{k-1}$, and
$$\ukdens_{k-1}(x, x') = \olik(x_k, y_k) \strans(x_{k-1}, x_k) \eqsp,
$$ we conclude that the sequence $\{ \post{k}{k} \}_{k \geq 1}$ of filtering
densities can be generated according to \eqref{eq:generic-flow}.

The case of fixed interval smoothing works entirely analogously: indeed, since 
\begin{equation*}
    \post{0:k}{k-1}(x_{0:k}|y_{0:k-1}) =
    \post{0:k-1}{k-1}(x_{0:k-1}|y_{0:k-1}) \strans(x_{k-1},x_k) 
\end{equation*}
and
\begin{equation*}
    \post{0:k}{k}(x_{k}|y_{0:k}) = \frac{\olik(x_k,y_k) \post{k}{k-1}(x_{0:k}|y_{0:k-1})}{\lhood{k|k-1}(y_k|y_{0:k-1})} \eqsp,
\end{equation*}
the flow $\{ \post{0:k}{k} \}_{k \geq 1}$ of smoothing distributions can be generated according to \eqref{eq:generic-flow} by letting $p_k = \post{0:k}{k}$, $p_{k-1}= \post{0:k-1}{k-1}$, and replacing $\ukdens_{k-1}(x_{0:k-1}, x'_{0:k}) \, \ud x'_{0:k}$ by $\olik(x'_k, y_k) \, \strans(x_{k-1}, x'_k) \, \ud x'_k \\\, \delta_{x_{0:k-1}}(\ud x'_{0:k-1})$,
where $\delta_a$ denotes the Dirac mass located in $a$. Note that this replacement is done formally since the unnormalised transition kernel in question lacks a density in the smoothing mode; this is due to the fact that the Dirac measure is singular \wrt\ the Lebesgue measure. This is however handled by the measure theoretic approach in Section~\ref{section:theoretical:results}, implying that all theoretical results presented in the following will comprise also fixed interval smoothing.

We now adapt the procedures considered in the previous section to the sampling of
densities generated according to \eqref{eq:generic-flow}. Here we focus on a single time-step, and drop from the notation the dependence on $k$ which is irrelevant at this stage. Moreover, set $p_k = \mu$, $p_{k-1} = \nu$, $\ukdens_k = \ukdens$, and assume that we
have at hand a weighted sample $\{ (\parti{N,i}, \wgt{N,i}) \}_{i=1}^{N}$ targeting
$\nu$, \ie, for any $\nu$-integrable function $f$, $\wgtsum^{-1} \sum_{i=1}^{N}
\wgt{N,i} f(\parti{N,i})$ approximates the corresponding integral $\int f(\xi)
\nu(\xi) \, \ud \xi$. A natural strategy for sampling from $\mu$ is to replace $\nu$
in \eqref{eq:generic-flow} by its particle approximation, yielding
\begin{equation*}
\targ(\tilde{\xi})  \define \sum_{i = 1}^{N} \frac{\wgt{N,i} \int \ukdens(\parti{N,i}, \tilde{\xi}) \, \ud \tilde{\xi}}{\sum_{j = 1}^{N} \wgt{N,j} \int \ukdens(\parti{N,j},\tilde{\xi}) \, \ud \tilde{\xi} } \left[ \frac{\ukdens(\parti{N,i}, \tilde{\xi})}{ \int \ukdens(\parti{N,i}, \tilde{\xi}) \, \ud \tilde{\xi}} \right]
\end{equation*}
as an approximation of $\mu$, and simulate $\tilde{M}_N$ new particles from this distribution; however, in many applications direct simulation from $\targ$ is
infeasible without the application of computationally expensive auxiliary accept-reject techniques introduced by \cite{huerzeler:kuensch:1998} and thoroughly analysed by \cite{kuensch:2005}. This difficulty can be overcome by simulating new particles $\{ \partitd{i} \}_{i = 1}^{\tilde{M}_N}$ from the instrumental mixture distribution with density
\[
\instr(\tilde{\xi}) \define \sum_{i=1}^{N} \frac{\wgt{N,i} \adj{i}}{\sum_{j=1}^{N} \wgt{N,j} \adj{j}} \propdens (\parti{N,i}, \tilde{\xi}) \eqsp,
\]
where $\{ \adj{i} \}_{i = 1}^{N}$ are the so-called \emph{adjustment multiplier weights} and $\propdens$ is a Markovian transition density function, \ie, $\propdens(\xi,\tilde{\xi})$ is a nonnegative function and, for any $\xi \in \stsp$, $\int r(\xi,\tilde{\xi}) \, \ud \tilde{\xi}= 1$. If one can guess, based on the new
observation, which particles are most likely to contribute significantly to the posterior, the
resampling stage may be anticipated by increasing (or decreasing) the importance weights. This is the purpose of using the multiplier weights $\adj{i}$. We associate these particles with importance weights $\{ \targ (\partitd{i}) / \instr (\partitd{i}) \}_{i = 1}^{\tilde{M}_N}$. In this setting, a new particle position is simulated from the transition proposal density $\propdens(\parti{N,i}, \cdot)$ with probability proportional to $\wgt{N,i} \adj{i}$. Haplessly, the importance weight $\targ(\partitd{i}) / \instr(\partitd{i})$ is expensive to evaluate since this involves summing over $N$ terms.

We thus introduce, as suggested by \cite{pitt:shephard:1999}, an \emph{auxiliary variable} corresponding to the selected particle, and target instead the probability density
\begin{equation}
\label{eq:definition:auxtarg}
\auxtarg(i,\tilde{\xi}) \define \frac{\wgt{N,i} \int \ukdens(\parti{N,i},\tilde{\xi}) \, \ud \tilde{\xi}}{\sum_{j = 1}^{N} \wgt{N,j} \int \ukdens(\parti{N,j}, \tilde{\xi}) \, \ud \tilde{\xi}} \left[ \frac{\ukdens(\parti{N,i}, \tilde{\xi})}{\int \ukdens(\parti{N,i}, \tilde{\xi}) \, \ud \tilde{\xi}} \right]
\end{equation}
on the product space $\{ 1, \ldots, N \} \times \stsp$. Since $\targ$ is the
marginal distribution of $\auxtarg$ with respect to the particle index $i$,
we may sample from $\targ$ by simulating instead a set $\{ (\ind{i}, \partitd{i}) \}_{i = 1}^{\tilde{M}_N}$ of indices and particle positions from the instrumental distribution
\begin{equation}
\label{eq:definition:auxinstr}
\auxinstr(i,\tilde{\xi}) \define \frac{\wgt{N,i} \adj{i}}{\sum_{j=1}^{N} \wgt{N,j} \adj{j}} \propdens (\parti{N,i},\tilde{\xi})
\end{equation}
and assigning each draw $(\ind{i}, \partitd{i})$ the weight
\begin{equation}
\label{eq:definition:weight}
\wgttd{i} \define \frac{\auxtarg (\ind{i}, \partitd{i})}{\auxinstr (\ind{i}, \partitd{i})} = \adj{\ind{i}}^{-1} \frac{\ukdens(\parti{N,\ind{i}}, \partitd{i})}{\propdens(\parti{N,\ind{i}},\partitd{i})} \eqsp.
\end{equation}
Hereafter, we discard the indices and let $\{ (\partitd{i}, \wgttd{i}) \}_{i = 1}^{\tilde{M}_N}$ approximate the target density $\mu$. Note that setting, for all $i \in \{1, \dots, N\}$, $\adj{i} \equiv 1$ yields the standard bootstrap particle filter presented by \cite{gordon:salmond:smith:1993}. In the sequel, we assume that each adjustment multiplier weight $\adj{i}$ is a function of the particle position $\adj{i}= \adjfunc(\parti{N,i})$, $i \in \{1, \dots, N\}$, and define
\begin{equation}
\label{eq:definition:weightfunction}
\wgtfunc(\xi, \tilde{\xi}) \define \adjfunc^{-1}(\xi) \frac{\ukdens(\xi, \tilde{\xi})}{ \propdens(\xi, \tilde{\xi})}\eqsp,
\end{equation}
so that $\auxtarg(i,\tilde{\xi}) / \auxinstr(i,\tilde{\xi})$ is proportional to $\wgtfunc(\parti{N,i}, \tilde{\xi})$. We will refer to the function $\adjfunc$ as the \emph{adjustment multiplier function}.

\subsection{Risk minimisation for sequential adaptive importance sampling and resampling}
We may expect that the efficiency of the algorithm described above depends highly on the choice of adjustment multiplier weights and proposal kernel.

In the context of state space models, \cite{pitt:shephard:1999} suggested to use an approximation, defined as the value of the likelihood evaluated at the mean of the prior transition, \ie\ $\adj{i} \define \olik \left( \int x' \strans(\parti{N,i},x') \, \ud x', y_k \right)$, where $y_k$ is the current observation, of the predictive likelihood as adjustment multiplier weights. Although this choice of the weight outperforms the conventional bootstrap filter in many applications, as pointed out in \cite{andrieu:davy:doucet:2003}, this approximation
of the predictive likelihood could be very poor and lead to performance even
worse than that of the conventional approach if the dynamic model $\strans(x_{k-1},x_k)$
is quite scattered and the likelihood $\olik(x_k,y_k)$ varies significantly over the prior
$\strans(x_{k-1},x_k)$.

The optimisation of the adjustment multiplier weight was also studied by \cite{douc:moulines:olsson:2007} (see also \cite{olsson:moulines:douc:2007}) who identified adjustment multiplier weights for which the increase of asymptotic variance at a single iteration of the algorithm is minimal. Note however that this optimisation is done using a \emph{function-specific} criterion, whereas we advocate here the use of \emph{function-free} criteria.

In our risk minimisation setting, this means that both the adjustment weights and the proposal kernels need to be adapted. As we will see below, these two problems are in general intertwined; however, in the following it will be clear that the two criteria CSD and KLD behave differently at this point. Because the criteria are rather involved, it is interesting to study their behaviour as the number of particles $N$ grows to infinity. This is done in Theorem \ref{th:KL:chi2:convergence}, which shows that the CSD $\chitwo(\auxtarg||\auxinstr)$ and KLD $\KL(\auxtarg ||\auxinstr)$ converges to
$\chitwo(\targjoint||\propjoint[\adjfunc])$ and $\KL(\targjoint || \propjoint[\adjfunc])$, respectively, where
\begin{equation}
\begin{split} \label{eq:def:targjoint:propjoint}
\targjoint(\xi,\tilde{\xi}) &\define  \frac{\nu(\xi) \, \ukdens(\xi, \tilde{\xi}) }{\iint \nu(\xi) \, \ukdens (\xi, \tilde{\xi}) \, \ud \xi \, \ud \tilde{\xi}} \eqsp, \\
\propjoint[\adjfunc](\xi,\tilde{\xi}) &\define  \frac{\nu(\xi) \adjfunc(\xi) \, \propdens(\xi, \tilde{\xi}) }{\iint \nu(\xi) \adjfunc(\xi) \, \propdens(\xi, \tilde{\xi}) \, \ud \xi \, \ud \tilde{\xi}} \eqsp.
\end{split}
\end{equation}
The expressions \eqref{eq:def:targjoint:propjoint} of the limiting distributions then allow for deriving the adjustment multiplier weight function $\adjfunc$ and the proposal density $\ukdens$ minimising the corresponding discrepancy measures. In absence of constraints (when $\adjfunc$ and $\ukdens$ can be chosen arbitrarily), the optimal solution for both the CSD and the KLD consists in setting $\adjfunc = \adjfunc^\ast$ and $\propdens = \propdens^\ast$, where
\begin{align}
\label{eq:optimal:adjfunc}
&\adjfunc^\ast(\xi) \define  \int \ukdens(\xi,\tilde{\xi}) \, \ud \tilde{\xi} = \int \frac{\ukdens(\xi,\tilde{\xi})}{\propdens(\xi,\tilde{\xi})} \propdens(\xi,\tilde{\xi}) \, \ud \tilde{\xi} \eqsp,\\
\label{eq:optimal:proposalkernel}
&\propdens^\ast(\xi,\tilde{\xi}) \define  \ukdens(\xi,\tilde{\xi}) / \adjfunc^\ast(\xi) \eqsp.
\end{align}
This choice coincides with the so-called \emph{optimal sampling strategy} proposed by \cite{huerzeler:kuensch:1998} and developed further by \cite{kuensch:2005}, which turns out to be \emph{optimal} (in absence of constraints) in our risk-minimisation setting.

\begin{remark}
The limiting distributions $\targjoint$ and $\propjoint[\adjfunc]$ have nice interpretations  within the framework of state space models (see the previous section). In this setting, the limiting distribution $\targjoint$ at time $k$ is the joint distribution $\post{k:k+1}{k+1}$ of the \emph{filtered} couple $X_{k:k+1}$, that is, the distribution of $X_{k:k+1}$ conditionally on the observation record $Y_{0:k+1}$; this can be seen as the asymptotic target distribution of our particle model. Moreover, the limiting distribution $\propjoint$ at time $k$ is only slightly more intricate: Its first marginal corresponds to the filtering distribution at time $k$ reweighted by the adjustment function
$\adjfunc$, which is typically used for incorporating information from the new
observation $Y_{k+1}$. The second marginal of $\propjoint$ is then obtained by
propagating this weighted filtering distribution through the Markovian dynamics of the proposal kernel $\prop$; thus, $\propjoint[\adjfunc]$ describes completely the asymptotic instrumental distribution of the APF, and the two quantities $\KL(\targjoint||\propjoint)$ and
$\chitwo(\targjoint||\propjoint)$ reflect the asymptotic discrepancy between the true model and the particle model at the given time step.
\end{remark}

In presence of constraints on the choice of $\adjfunc$ and $\propdens$, the optimisation of  the adjustment weight function and the proposal kernel density is intertwined. By the so-called \emph{chain rule for entropy} \citep[see][Theorem 2.2.1]{cover:thomas:1991}, we have
\begin{equation*}
\KL(\targjoint || \propjoint[\adjfunc])= \int \frac{\nu(\xi) }{\nu(\adjfunc^\ast)} \adjfunc^\ast(\xi)\log \left( \frac{ \adjfunc^\ast(\xi) / \nu(\adjfunc^\ast)}{\adjfunc(\xi) / \nu(\adjfunc)} \right) \ud \xi 
+ \iint \frac{\nu(\xi)}{\nu(\adjfunc^\ast)}  \ukdens(\xi,\tilde{\xi}) \log \left( \frac{\propdens^\ast(\xi,\tilde{\xi})}{\propdens(\xi,\tilde{\xi})} \right) \ud \xi \, \ud \tilde{\xi} 
\end{equation*}
where $\nu(f) \define \int \nu(\xi) f(\xi) \, \ud \xi $. Hence, if the optimal adjustment function can be chosen freely, then, whatever the choice of the proposal kernel is, the best choice is still $\adjfunc^\ast_{\mathrm{KL},r}  = \adjfunc^\ast$: the best that we can do is to choose $\adjfunc^\ast_{\mathrm{KL},r}$ such that the two marginal distributions $\xi \mapsto \int \targjoint(\xi,\tilde{\xi}) \, \ud \tilde{\xi}$ and $\xi \mapsto \int \propjoint(\xi,\tilde{\xi}) \, \ud \tilde{\xi}$ are identical. If the choices of the weight adjustment function and the proposal kernels are constrained (if, \eg, the weight should be chosen in a pre-specified family of functions or the proposal kernel belongs to a parametric family), nevertheless, the optimisation of $\adjfunc$ and $\propdens$ decouple asymptotically. The optimisation for the CSD does not lead to such a nice decoupling of the adjustment function and the proposal transition; nevertheless, an explicit expression for the adjustment multiplier weights can still be found in
  this case:
\begin{equation}
\label{eq:optimal:adjfuncoptchi}
\adjfunc^\ast_{\chi^2,r}(\xi) \define \sqrt{ \int \frac{ \ukdens^2(\xi, \tilde{\xi})}{\propdens(\xi, \tilde{\xi})}  \, \ud \tilde{\xi}} 
=  \sqrt{ \int \frac{ \ukdens^2(\xi, \tilde{\xi})}{\propdens^2(\xi, \tilde{\xi})} \propdens(\xi,\tilde{\xi}) \, \ud \tilde{\xi}} \eqsp.
\end{equation}
Compared to \eqref{eq:optimal:adjfunc}, the optimal adjustment function for the CSD is the $\Lp{2}$ (rather than the $\Lp{1}$) norm of $\xi \mapsto \ukdens^2(\xi, \tilde{\xi})/\propdens^2(\xi, \tilde{\xi})$. Since $\ukdens(\xi,\tilde{\xi})= \adjfunc^\ast(\xi) \propdens^\ast(\xi,\tilde{\xi})$ (see definitions \eqref{eq:optimal:adjfunc} and \eqref{eq:optimal:proposalkernel}), we obtain, not surprisingly, if we set $\propdens = \propdens^\ast$, $\adjfunc^\ast_{\chi^2,r}(\xi)= \adjfunc^\ast(\xi)$.

Using this risk minimisation formulation, it is possible to select the adjustment weight function as well as the proposal kernel by minimising either the CSD or the KLD criteria. Of course, compared to the sophisticated adaptation strategies considered for adaptive importance sampling, we focus on elementary schemes, the computational burden being quickly a limiting factor in the SMC context.

To simplify the presentation, we consider in the sequel the adaptation of the proposal kernel; as shown above, it is of course possible and worthwhile to jointly optimise the adjustment weight and the proposal kernel, but for clarity we prefer to postpone the presentation of such a technique to a future work. The optimisation of the adjustment weight function is in general rather complex: indeed, as mentioned above, the computation of the optimal adjustment weight function requires the computing of an integral. This integral can be evaluated in closed form only for a rather limited number of models; otherwise, a numerical approximation (based on cubature formulae, Monte Carlo etc) is required, which may therefore incur a quite substantial computational cost. If proper simplifications and approximations are not found (which are, most often, model specific) the gains in efficiency are not necessarily worth the extra cost. In state space (tracking) problems simple and efficient approximations, based either on the EKF or the UKF (see for example \cite{andrieu:davy:doucet:2003} or \cite{shen:dick:brooks:2004}), have been proposed for several models, but the validity of this sort of approximations cannot necessarily be extended to more general models.

In the light of the discussion above, a natural strategy for adaptive design of $\auxinstr$ is to minimise the empirical estimate $\entropy$ (or $\CV$) of the KLD (or CSD) over all proposal kernels belonging to some parametric family $\{ \propdens_\param \}_{\param \in \paramsp}$. This can be done using straightforward adaptations of the two methods described in Section~\ref{subsec:AdaptiveImportanceSampling}.
We postpone a more precise description of the algorithms and implementation issues to after the next section, where more rigorous measure-theoretic notation is introduced and the main theoretical results are stated.
\section{Notations and definitions}

\label{section:notation:and:definitions}
To state precisely the results, we will now use measure-theoretic notation.
In the following we assume that all random variables are defined on a
common probability space $(\Omega, \mathcal{F}, \prob)$ and let, for
any general state space $(\stsp, \alg(\stsp))$, $\meas(\stsp)$ and
$\measfunc(\stsp)$ be the sets of probability measures on
$(\stsp, \alg(\stsp))$ and measurable functions from $\stsp$ to $\R$,
respectively.

A kernel $K$ from $(\stsp,\alg(\stsp))$ to some other
state space $(\stsptd,\alg(\stsptd))$ is called \emph{finite} if
$K(\xi, \stsptd) < \infty$ for all $\xi \in \stsp$ and
\emph{Markovian} if $K(\xi, \stsptd) = 1$ for all $\xi \in \stsp$. Moreover, $K$ induces two operators, one transforming a function
$f \in \measfunc(\stsp \times \stsptd)$ satisfying $\int_{\stsptd}
|f(\xi, \tilde{\xi})| \, K(\xi, \ud \tilde{\xi}) < \infty$ into
another function
\[
\xi \mapsto K(\xi, f) \define \int_{\stsptd} f(\xi, \tilde{\xi}) \, K(\xi, \ud \tilde{\xi})
\]
in $\measfunc(\stsp)$; the other transforms a measure $\nu \in \meas(\stsp)$ into another measure
\begin{equation}
\label{eq:composition-measure-kernel}
A \mapsto \nu K(A) \define \int_{\stsp} K(\xi, A) \, \nu(\ud \xi)
\end{equation}
in $\meas(\stsptd)$. Furthermore, for any probability measure $\mu \in
\meas(\stsp)$ and function $f \in \measfunc(\stsp)$ satisfying
$\int_{\stsp} |f(\xi)| \, \mu(\ud \xi) < \infty$, we write $\mu(f)
\define \int_{\stsp} f(\xi) \, \mu(\ud \xi)$.

The \emph{outer product} of the measure $\gamma$ and the kernel $T$, denoted by $\gamma \otimes T$, is defined as the measure on the product space $\stsp \times \stsptd$, equipped with the product $\sigma$-algebra $\alg(\stsp) \otimes \alg(\stsptd)$, satisfying
\begin{equation}
\label{eq:outer-product-measure-kernel}
\gamma \otimes T(A) \define \iint_{\stsp \times \stsptd} \gamma( \ud \xi) \, T(\xi, \ud \tilde{\xi}) \indic_A(\xi,\xi')
\end{equation}
for any $A \in \alg(\stsp) \otimes \alg(\stsptd)$. For a non-negative function $f \in \measfunc (\stsp)$, we define the modulated measure $\gamma[f]$ on $(\stsp, \alg(\stsp))$ by
\begin{equation}
\label{eq:modulated-measure}
\gamma[f](A) \define \gamma( f \indic_A) \eqsp,
\end{equation}
for any $A \in \alg(\stsp)$.

In the sequel, we will use the following definition. A set $\mathsf{C}$ of real-valued functions on $\stsp$ is said to be \emph{proper} if the following conditions hold: $\textbf{(i)}$ $\mathsf{C}$ is a linear space; $\textbf{(ii)}$ if $g \in \mathsf{C}$ and $f$ is measurable with $|f| \leq |g|$, then $|f| \in \mathsf{C}$; $\textbf{(iii)}$ for all $c \in \R$, the constant function $f \equiv c$ belongs to $\mathsf{C}$.

\begin{definition}
\label{def:consistency}
A weighted sample $\{(\parti{N,i}, \wgt{N,i})\}_{i = 1}^{M_N}$
on $\stsp$ is said to be \emph{consistent} for the probability
measure $\nu \in \meas(\stsp)$ and the set $\mathsf{C}$ if, for any $f \in \mathsf{C}$, as $N \rightarrow \infty$,
\[
\begin{split}
&\wgtsum^{-1} \sum_{i=1}^{M_N} \wgt{N,i} f(\parti{N,i})
\stackrel{\prob}{\longrightarrow} \nu(f) \eqsp,\\ 
&\wgtsum^{-1} \max_{1 \leq i \leq M_N} \wgt{N,i}
\stackrel{\prob}{\longrightarrow} 0 \eqsp,
\end{split}
\]
where $\wgtsum \define \sum_{i=1}^{M_N} \wgt{N,i}$.
\end{definition}
Alternatively, we will sometimes say that the weighted sample in Definition~\ref{def:consistency} \emph{targets} the measure $\nu$.

Thus, suppose that we are given a weighted sample $\{ (\parti{N,i}, \wgt{N,i}) \}_{i = 1}^{M_N}$ targeting $\nu \in \meas(\stsp)$. We wish to transform this sample into a new weighted particle sample approximating the probability measure
\begin{equation} \label{eq:mu:def}
\mu(\cdot) \define \frac{\nu \uk(\cdot)}{\nu \uk(\stsptd)} = \frac{\int_{\stsp} \uk(\xi, \cdot) \, \nu(\ud \xi)}{\int_{\stsp} \uk(\xi', \stsptd) \, \nu(\ud \xi')}
\end{equation}
on some other state space $(\stsptd,\alg(\stsptd))$. Here $\uk$ is a finite transition kernel from $(\stsp,\alg(\stsp))$ to $(\stsptd,\alg(\stsptd))$. As suggested by \cite{pitt:shephard:1999}, an auxiliary variable corresponding to the selected stratum, and target  the measure
\begin{equation}
\label{eq:definition:auxtarg-nodens}
\auxtarg(\{ i \} \times A) \\ \define \frac{\wgt{N,i} \uk(\parti{N,i},
  \stsptd)}{\sum_{j = 1}^{M_N} \wgt{N,j} \uk(\parti{N,j}, \stsptd)} \left[
  \frac{\uk(\parti{N,i}, A)}{\uk(\parti{N,i}, \stsptd)} \right]
\end{equation}
on the product space $\{ 1, \ldots, M_N \} \times \stsp$. Since $\targ$ is the
marginal distribution of $\auxtarg$ with respect to the particle
position, we may sample from $\targ$ by simulating instead a set $\{ (\ind{i}, \partitd{i}) \}_{i = 1}^{\tilde{M}_N}$ of indices and particle positions from the instrumental distribution
\begin{equation}
	\label{eq:definition:auxinst-nodens}
\auxinstr(\{ i \} \times A) \define \frac{\wgt{N,i}
  \adj{i}}{\sum_{j=1}^{M_N} \wgt{N,j} \adj{j}} \prop (\parti{N,i}, A)
\end{equation}
and assigning each draw $(\ind{i}, \partitd{i})$ the weight
\begin{equation*}
\label{eq:definition:weight-nodens}
\wgttd{i} \define \adj{\ind{i}}^{-1} \frac{\ud \uk(\parti{N,\ind{i}}, \cdot)}{\ud \prop(\parti{N,\ind{i}},  \cdot)}(\partitd{i})
\end{equation*}
being proportional to $\ud \auxtarg / \ud \auxinstr (\ind{i}, \partitd{i})$---the formal difference with Equation~\eqref{eq:definition:weight} lies only in the use of
Radon-Nykodym derivatives of the two kernels rather than densities \wrt\
Lebesgue measure. Hereafter, we discard the indices
and take $\{ (\partitd{i}, \wgttd{i}) \}_{i = 1}^{\tilde{M}_N}$ as an approximation of
$\mu$. The algorithm is summarised below.

\begin{algorithm}[ht]
\caption{Nonadaptive APF}
\label{alg:auxiliary:importance:sampling}
\begin{algorithmic}[1]
\Require $\{ (\parti{N,i}, \wgt{N,i}) \}_{i=1}^{M_N}$ targets $\nu$.
\State Draw $\{ \ind{i} \}_{i = 1}^{\tilde{M}_N} \sim
\mathcal{M}( \tilde{M}_N, \{ \wgt{N,j} \adj{j} / \sum_{\ell=1}^{M_N}
\wgt{N,\ell} \adj{\ell} \}_{j = 1}^{M_N})$,
\State simulate $\{ \partitd{i} \}_{i = 1}^{\tilde{M}_N} \sim \bigotimes_{i = 1}^{\tilde{M}_N} \prop(\parti{N,\ind{i}}, \cdot)$,
\State set, for all $i \in \{1,\dots, \tilde{M}_N\}$,
\[
\wgttd{i} \leftarrow
\adj{\ind{i}}^{-1} \ud \uk(\parti{N,\ind{i}}, \cdot) / \ud \prop(\parti{N,\ind{i}},
\cdot)(\partitd{i}) \eqsp.
\]
\State take $\{ (\partitd{i}, \wgttd{i}) \}_{i = 1}^{\tilde{M}_N}$ as
an approximation of $\mu$.
\end{algorithmic}
\end{algorithm}

\section{Theoretical results}
\label{section:theoretical:results}
Consider the following assumptions.
\begin{hyp}{hyp:cons:initial:sample}
The initial sample $\{(\parti{N,i}, \wgt{N,i})\}_{i = 1}^{M_N}$ is
consistent for $(\nu, \mathsf{C})$.
\end{hyp}
\begin{hyp}{hyp:weight:function:assumption}
There exists a function $\adjfunc : \stsp \rightarrow \R^+$ such that
$\adj{i} = \adjfunc(\parti{N,i})$; moreover, $\adjfunc \in \mathsf{C} \cap \Lp{1}(\stsp, \nu)$ and $\uk(\cdot, \stsptd) \in \mathsf{C}$.
\end{hyp}

Under these assumptions we define for $(\xi, \tilde{\xi}) \in \stsp \times \stsptd$ the weight function
\begin{equation}
\label{eq:definition:weightfunction-nodens}
\wgtfunc(\xi, \tilde{\xi}) \define \adjfunc^{-1}(\xi) \frac{\ud \uk(\xi, \cdot)}{\ud \prop(\xi, \cdot)}(\tilde{\xi}) \eqsp,
\end{equation}
so that for every index $i$, $\wgttd{i} = \wgtfunc(\parti{\ind{i}}, \partitd{i})$.
The following result describes how the consistency property is passed through one step of the APF algorithm. A somewhat less general version of this result was also proved in \cite{douc:moulines:olsson:2007} (Theorem~3.1).


\begin{proposition}
Assume {\bf(A\ref{hyp:hyp:cons:initial:sample}, A\ref{hyp:hyp:weight:function:assumption})}. Then the weighted sample $\{(\partitd{i}, \wgttd{i})\}_{i = 1}^{\tilde{M}_N}$ is consistent for $(\nu, \tilde{\mathsf{C}})$, where $\tilde{\mathsf{C}} \define \{ f \in \Lp{1}(\stsptd, \mu), \uk(\cdot, |f|) \in \mathsf{C} \}$.
\end{proposition}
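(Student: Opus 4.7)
The plan is to reduce the consistency claim for the output sample to the input-sample consistency \textbf{(A1)} via a conditioning argument, exploiting the fact that, conditional on the $\sigma$-field $\mathcal{F}_N$ generated by the input sample, the pairs $\{(\ind{i},\partitd{i})\}_{i=1}^{\tilde{M}_N}$ are i.i.d.\ draws from $\auxinstr$. First I would rewrite the self-normalized estimator as a ratio
\[
\wgtsumtd^{-1}\sum_{i=1}^{\tilde{M}_N}\wgttd{i}f(\partitd{i})=\frac{\tilde{M}_N^{-1}\sum_{i=1}^{\tilde{M}_N}\wgttd{i}f(\partitd{i})}{\tilde{M}_N^{-1}\sum_{i=1}^{\tilde{M}_N}\wgttd{i}},
\]
and show that the numerator and denominator converge in probability to $\nu\uk(f)/\nu(\adjfunc)$ and $\nu\uk(\stsptd)/\nu(\adjfunc)$, respectively; Slutsky then yields the target limit $\mu(f)=\nu\uk(f)/\nu\uk(\stsptd)$.

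The core computation is the conditional mean
\[
\E[\wgttd{1}f(\partitd{1})\mid\mathcal{F}_N]=\sum_{j=1}^{M_N}\frac{\wgt{N,j}\adjfunc(\parti{N,j})}{\sum_{\ell}\wgt{N,\ell}\adjfunc(\parti{N,\ell})}\adjfunc^{-1}(\parti{N,j})\uk(\parti{N,j},f)=\frac{\wgtsum^{-1}\sum_{j}\wgt{N,j}\uk(\parti{N,j},f)}{\wgtsum^{-1}\sum_{\ell}\wgt{N,\ell}\adjfunc(\parti{N,\ell})},
\]
obtained by integrating $\partitd{1}$ out against $\prop(\parti{N,\ind{1}},\cdot)$ via the Radon--Nikodym form of $\wgtfunc$, then averaging over the multinomial law of $\ind{1}$; note that the adjustment weights cancel inside each summand, which is precisely their role. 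For $f\in\tilde{\mathsf{C}}$, the defining property $\uk(\cdot,|f|)\in\mathsf{C}$ combined with properness of $\mathsf{C}$ (domination plus linearity, applied to the decomposition $f=f^+-f^-$) places $\uk(\cdot,f)$ in $\mathsf{C}$, while $\adjfunc\in\mathsf{C}$ by \textbf{(A2)}; assumption \textbf{(A1)} thus yields convergence of the above ratio in probability to $\nu\uk(f)/\nu(\adjfunc)$, the denominator being strictly positive. Taking $f=\indic_{\stsptd}$, admissible since $\uk(\cdot,\stsptd)\in\mathsf{C}$ by \textbf{(A2)}, handles the denominator of the outer ratio analogously.

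Lifting convergence of conditional means to convergence of the sums themselves is the role of a conditional weak law of large numbers for triangular arrays, in the spirit of the auxiliary lemmas of \cite{douc:moulines:olsson:2007}. For bounded $f$ the conditional second moment of $\wgttd{1}f(\partitd{1})$ is easily controlled and the WLLN is routine; the main obstacle I anticipate is the extension to arbitrary $f\in\Lp{1}(\mu)$, which I would handle by the classical truncation $f=f\indic\{|f|\le K\}+f\indic\{|f|>K\}$, applying \textbf{(A1)} to $\uk(\cdot,|f|\indic\{|f|>K\})\in\mathsf{C}$ to make the tail contribution uniformly small in $N$ before letting $K\uparrow\infty$. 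Finally, the maximum-weight condition $\wgtsumtd^{-1}\max_{i}\wgttd{i}\stackrel{\prob}{\longrightarrow}0$ follows from the same framework: the denominator has a strictly positive probabilistic limit, while the $\wgttd{i}$ are conditionally i.i.d.\ with asymptotically bounded conditional mean, and a Markov plus union bound estimate gives $\tilde{M}_N^{-1}\max_i\wgttd{i}\stackrel{\prob}{\longrightarrow}0$.
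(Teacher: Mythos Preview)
Your approach is correct and follows essentially the same skeleton as the paper's: write the self-normalised estimator as a ratio, show the unnormalised average $\tilde{M}_N^{-1}\sum_i\wgttd{i}f(\partitd{i})$ converges in probability to $\nu\uk(f)/\nu(\adjfunc)$ via a conditional LLN plus truncation, and conclude by Slutsky. The core conditional-mean identity you display is exactly the cancellation that drives the argument.

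The paper organises the proof slightly differently, and it is worth noting the distinction. Rather than conditioning on the entire input sample and treating $(\ind{i},\partitd{i})$ as i.i.d.\ from $\auxinstr$, the paper splits the mechanism into two stages. First it reweights the input sample by $\adjfunc$ and applies the reweighting and resampling consistency results of \cite{douc:moulines:2008} (their Theorems~1 and~2) to conclude that the equally weighted sample $\{(\parti{N,\ind{i}},1)\}_{i=1}^{\tilde{M}_N}$ is consistent for $\varphi(\cdot)=\nu(\adjfunc\,\indic_\cdot)/\nu(\adjfunc)$. It then conditions on the \emph{selected} particles $\mathcal{F}_N=\sigma(\{\parti{N,\ind{i}}\})$ and handles the mutation step by verifying the two hypotheses of the triangular-array LLN (Theorem~11 in \cite{douc:moulines:2008}): tightness of the sequence of conditional means, and the asymptotic negligibility condition $\tilde{M}_N^{-1}\sum_i\E[\wgttd{i}|f|\indic\{\wgttd{i}|f|\ge\epsilon\tilde{M}_N\}\mid\mathcal{F}_N]\to 0$, the latter via a truncation at level $C$ followed by dominated convergence. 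Your one-shot conditioning is more direct; the paper's two-stage route has the advantage that each step is an instance of an off-the-shelf lemma, and the max-weight condition comes for free from that machinery rather than needing the separate Markov/union-bound argument you sketch (which, incidentally, does require the same truncation idea to make rigorous, not merely a bounded conditional mean).
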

The result above is a direct consequence of Lemma~\ref{lemma:cons:product:space} and the fact that the set $\tilde{\mathsf{C}}$ is proper.

Let $\mu$ and $\nu$ be two probability measures in $\meas(\boldsymbol{\Lambda})$ such that $\mu$ is absolutely continuous with respect to $\nu$. We then recall that the KLD and the CSD are, respectively, given by
\begin{align*}
&\KL(\mu || \nu) \define \int_{\boldsymbol{\Lambda}} \log [\ud \mu / \ud \nu (\lambda)] \, \mu(\ud \lambda) \eqsp, \\ &\chitwo(\mu || \nu) \define \int_{\boldsymbol{\Lambda}} [\ud \mu / \ud \nu (\lambda) - 1]^2 \, \nu(\ud \lambda) \eqsp.
\end{align*}
Define the two probability measures on the product space 
$(\stsp \times \stsptd, \alg(\stsp) \otimes \alg(\stsptd))$:
\begin{equation}
\label{eq:deftargjoint}
\targjoint(A) \define \frac{\nu \otimes \uk}{ \nu \uk (\stsptd)}(A) 
= \frac{\iint_{\stsp \times \stsptd} \nu(\ud \xi) \, \uk(\xi, \ud\xi') \indic_A(\xi,\xi')}{\iint_{\stsp \times \stsptd} \nu(\ud \xi) \, \uk (\xi, \ud \xi')} \eqsp,
\end{equation}
\begin{equation}
\label{eq:defpropjoint}
\propjoint[\adjfunc](A) \define \frac{\nu[\adjfunc] \otimes \prop}{\nu(\adjfunc)}(A) 
= \frac{\iint_{\stsp \times \stsptd} \nu(\ud \xi) \adjfunc(\xi) \, \prop(\xi, \ud \xi') \indic_A(\xi,\xi')}{\iint_{\stsp \times \stsptd} \nu(\ud \xi) \adjfunc(\xi) \, \prop(\xi, \ud \xi')} \eqsp,
\end{equation}
where $A \in \alg(\stsp) \otimes \alg(\stsptd)$ and the outer product $\otimes$ of a measure and a kernel is defined in \eqref{eq:outer-product-measure-kernel}.
\begin{theorem} \label{th:KL:chi2:convergence}
Assume {\bf(A\ref{hyp:hyp:cons:initial:sample}, A\ref{hyp:hyp:weight:function:assumption})}. Then the following holds as $N \rightarrow \infty$.
\begin{enumerate}[(i)]
\item If $\uk(\cdot, |\log \wgtfunc|) \in \mathsf{C} \cap
  \Lp{1}(\stsp, \nu)$, then
\begin{equation}
\label{eq:limiting-KL}
\KL(\auxtarg || \auxinstr) \stackrel{\prob}{\longrightarrow} \KL \left( \left. \targjoint  \right\|  \propjoint[\adjfunc] \right) \eqsp,
\end{equation}

\item If $\uk(\cdot, \wgtfunc) \in \mathsf{C}$, then
\begin{equation}
\label{eq:limiting-chitwo}
\chitwo(\auxtarg || \auxinstr) \stackrel{\prob}{\longrightarrow}  \chitwo \left( \left. \targjoint \right\|  \propjoint[\adjfunc] \right) \eqsp,
\end{equation}
\end{enumerate}
\end{theorem}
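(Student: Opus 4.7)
The plan is to exploit the product structure of $\auxtarg$ and $\auxinstr$ on $\{1,\dots,M_N\}\times\stsp$ to reduce both divergences to ratios of weighted averages of functions of the initial particles $\parti{N,i}$, and then invoke the consistency hypothesis \refhyp{hyp:cons:initial:sample}. A direct computation from \eqref{eq:definition:auxtarg-nodens}--\eqref{eq:definition:auxinst-nodens} shows that the Radon--Nikodym derivative factors as
\[
\frac{\ud \auxtarg}{\ud \auxinstr}(i,\tilde{\xi}) = C_N\,\wgtfunc(\parti{N,i},\tilde{\xi}),\qquad C_N \define \frac{\sum_{j=1}^{M_N} \wgt{N,j}\adj{j}}{\sum_{j=1}^{M_N} \wgt{N,j}\uk(\parti{N,j},\stsptd)}.
\]
Since $\adjfunc, \uk(\cdot,\stsptd)\in\mathsf{C}$ by \refhyp{hyp:weight:function:assumption}, applying \refhyp{hyp:cons:initial:sample} and Slutsky's lemma to the numerator and denominator yields $C_N \stackrel{\prob}{\longrightarrow} \nu(\adjfunc)/\nu\uk(\stsptd)$, which is strictly positive.

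For part (i), integrating $\log(\ud\auxtarg/\ud\auxinstr)$ against $\auxtarg$ and summing over the auxiliary index produces
\[
\KL(\auxtarg || \auxinstr) = \log C_N + \frac{\sum_{i=1}^{M_N} \wgt{N,i}\, \uk(\parti{N,i},\log\wgtfunc)}{\sum_{j=1}^{M_N} \wgt{N,j}\,\uk(\parti{N,j},\stsptd)}.
\]
The hypothesis $\uk(\cdot,|\log\wgtfunc|)\in\mathsf{C}\cap\Lp{1}(\stsp,\nu)$, together with properness of $\mathsf{C}$, ensures $\uk(\cdot,\log\wgtfunc)\in\mathsf{C}\cap\Lp{1}(\stsp,\nu)$ via its positive/negative-part decomposition. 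A second application of \refhyp{hyp:cons:initial:sample}, combined with Slutsky and continuity of $\log$ at positive points, then gives in-probability convergence of $\KL(\auxtarg||\auxinstr)$ to $\log[\nu(\adjfunc)/\nu\uk(\stsptd)] + \nu\otimes\uk(\log\wgtfunc)/\nu\uk(\stsptd)$. Finally, \eqref{eq:deftargjoint}--\eqref{eq:defpropjoint} yield $\ud\targjoint/\ud\propjoint[\adjfunc] = [\nu(\adjfunc)/\nu\uk(\stsptd)]\,\wgtfunc$, whose $\targjoint$-integrated logarithm reproduces exactly this limit, so it equals $\KL(\targjoint||\propjoint[\adjfunc])$.

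Part (ii) is entirely analogous but simpler: using $\chitwo(\mu||\pi) = \int (\ud\mu/\ud\pi)\,\ud\mu - 1$ one obtains
\[
\chitwo(\auxtarg||\auxinstr) = C_N \cdot \frac{\sum_{i=1}^{M_N} \wgt{N,i}\,\uk(\parti{N,i},\wgtfunc)}{\sum_{j=1}^{M_N} \wgt{N,j}\,\uk(\parti{N,j},\stsptd)} - 1,
\]
and the hypothesis $\uk(\cdot,\wgtfunc)\in\mathsf{C}$, together with \refhyp{hyp:cons:initial:sample} and Slutsky, gives convergence in probability to $[\nu(\adjfunc)/\nu\uk(\stsptd)]\,[\nu\otimes\uk(\wgtfunc)/\nu\uk(\stsptd)] - 1$, which coincides with $\chitwo(\targjoint||\propjoint[\adjfunc])$ by the same Radon--Nikodym computation. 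The principal obstacle is the bookkeeping ensuring that every test function to which consistency is applied belongs to $\mathsf{C}$; the extra integrability hypothesis in (i) is stated for $|\log\wgtfunc|$ precisely so that the proper-set structure of $\mathsf{C}$ can absorb the sign of $\log\wgtfunc$, whereas in (ii) nonnegativity of $\wgtfunc$ obviates this issue.
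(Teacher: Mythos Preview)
Your proof is correct and follows essentially the same approach as the paper: both compute the Radon--Nikodym derivative $\ud\auxtarg/\ud\auxinstr(i,\tilde{\xi}) = C_N\,\wgtfunc(\parti{N,i},\tilde{\xi})$, reduce $\KL$ and $\chitwo$ to the same ratios of weighted empirical averages over the initial particles, apply \refhyp{hyp:cons:initial:sample} together with properness of $\mathsf{C}$ and the continuous mapping theorem, and identify the limit via $\ud\targjoint/\ud\propjoint[\adjfunc] = [\nu(\adjfunc)/\nu\uk(\stsptd)]\,\wgtfunc$. The paper merely packages the last identification into a separate lemma (Lemma~\ref{lemma:KLD:CSD:identities}) and writes the integrals using conditional expectations $\E_{\auxtarg}[\,\cdot \mid \ind{m}=i\,]$ rather than the explicit Radon--Nikodym density, but the substance is identical.
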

Additionally, \entropy\ and \CV, defined in~\eqref{eq:definition:entropy} and \eqref{eq:definition:CV} respectively, converge to the same limits.
\begin{theorem} \label{th:entropy:CV:convergence}
Assume {\bf(A\ref{hyp:hyp:cons:initial:sample}, A\ref{hyp:hyp:weight:function:assumption})}. Then the following holds as $N \rightarrow \infty$.
\begin{enumerate}[(i)]
\item If $\uk(\cdot, |\log \wgtfunc|) \in \mathsf{C} \cap
  \Lp{1}(\stsp, \nu)$, then
\begin{equation}
\label{eq:limiting-entropy}
\entropy( \{ \wgttd{i} \}_{i =1}^{\tilde{M}_N} )\stackrel{\prob}{\longrightarrow} \KL \left( \left. \targjoint  \right\|  \propjoint[\adjfunc] \right) \eqsp.
\end{equation}

\item If $\uk(\cdot, \wgtfunc) \in \mathsf{C}$, then
\begin{equation}
\label{eq:limiting-CV}
 \CV( \{ \wgttd{i} \}_{i = 1}^{\tilde{M}_N}) \stackrel{\prob}{\longrightarrow}  \chitwo \left( \left. \targjoint \right\|  \propjoint[\adjfunc] \right) \eqsp.
\end{equation}
\end{enumerate}
\end{theorem}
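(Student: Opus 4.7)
The strategy is to write $\entropy$ and $\CV$ as continuous transforms of a few unnormalised empirical averages and then reduce their limits to the product-space consistency machinery underlying the proposition preceding Theorem~\ref{th:KL:chi2:convergence}. Setting
\begin{equation*}
A_N \define \tilde{M}_N^{-1}\sum_{i=1}^{\tilde{M}_N}\wgttd{i},\quad B_N \define \tilde{M}_N^{-1}\sum_{i=1}^{\tilde{M}_N}\wgttd{i}\log\wgttd{i},\quad D_N \define \tilde{M}_N^{-1}\sum_{i=1}^{\tilde{M}_N}(\wgttd{i})^2,
\end{equation*}
the definitions~\eqref{eq:definition:entropy}-\eqref{eq:definition:CV} immediately yield $\entropy(\{\wgttd{i}\}) = B_N/A_N - \log A_N$ and $\CV(\{\wgttd{i}\}) = D_N/A_N^2 - 1$, so by the continuous mapping theorem it suffices to control the three averages.

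Write $C \define \nu\uk(\stsptd)/\nu(\adjfunc)$. The goal is to show $A_N \stackrel{\prob}{\longrightarrow} C$, $B_N \stackrel{\prob}{\longrightarrow} C\,\targjoint(\log\wgtfunc)$, and $D_N \stackrel{\prob}{\longrightarrow} C\,\targjoint(\wgtfunc)$. Each is an unnormalised empirical average of $H(\parti{N,\ind{i}}, \partitd{i})$ along the $\auxinstr$-draws, with $H \in \{\wgtfunc,\wgtfunc\log\wgtfunc,\wgtfunc^2\}$. Conditioning on the first-stage sample, the conditional mean reduces via~\eqref{eq:definition:auxinst-nodens} and~\eqref{eq:definition:weightfunction-nodens} to $\sum_j \wgt{N,j}\,\uk(\parti{N,j}, h)/\sum_\ell \wgt{N,\ell}\adj{\ell}$ for a suitable measurable integrand $h$, which converges in probability to the announced limit by \refhyp{hyp:cons:initial:sample}-\refhyp{hyp:weight:function:assumption} together with the integrability assumption of~(i) or~(ii). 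A conditional variance estimate---equivalently, a direct application of Lemma~\ref{lemma:cons:product:space} to the product-space weighted sample $\{((\parti{N,\ind{i}}, \partitd{i}), \wgttd{i})\}$ (which targets $\targjoint$)---then upgrades convergence of conditional means to convergence in probability of $A_N$, $B_N$, $D_N$ themselves.

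Granted these three limits, continuous mapping delivers $\entropy \to \targjoint(\log\wgtfunc) - \log C$ and $\CV \to C^{-1}\targjoint(\wgtfunc) - 1$. A one-line Radon-Nikodym computation from~\eqref{eq:deftargjoint}-\eqref{eq:defpropjoint} gives $\ud\targjoint/\ud\propjoint[\adjfunc] = C^{-1}\wgtfunc$, so these two values equal $\KL(\targjoint || \propjoint[\adjfunc])$ and $\chitwo(\targjoint || \propjoint[\adjfunc])$ respectively, matching the limits already identified in Theorem~\ref{th:KL:chi2:convergence}. The main obstacle is controlling the \emph{unnormalised} sums: the self-normalised consistency of the preceding proposition is inadequate, since the trivial choice $f\equiv 1$ in Definition~\ref{def:consistency} only returns $\wgtsumtd^{-1}\wgtsumtd = 1$. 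One must therefore descend to Lemma~\ref{lemma:cons:product:space}, and the hypotheses $\uk(\cdot,|\log\wgtfunc|) \in \mathsf{C}\cap\Lp{1}(\stsp,\nu)$ and $\uk(\cdot,\wgtfunc) \in \mathsf{C}$ are precisely what is needed to place $\log\wgtfunc$ respectively $\wgtfunc$ in the proper class for which that lemma applies---the same technical ingredient already exploited in Theorem~\ref{th:KL:chi2:convergence}, which is why the two results are stated in tandem.
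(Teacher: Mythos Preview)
Your proposal is correct and follows essentially the same route as the paper. The paper decomposes $\entropy$ as $\wgtsumtd^{-1}\sum_i\wgttd{i}\log\wgttd{i}+\log(\tilde{M}_N\wgtsumtd^{-1})$ and $\CV$ as $(\tilde{M}_N\wgtsumtd^{-1})\,\wgtsumtd^{-1}\sum_i\wgttd{i}^2-1$, then invokes Lemma~\ref{lemma:cons:product:space} for the self-normalised pieces together with the unnormalised limit~\eqref{eq:first:crucial:convergence} (established inside that lemma's proof) for the factor $\tilde{M}_N\wgtsumtd^{-1}$, and finally identifies the limits via Lemma~\ref{lemma:KLD:CSD:identities}; your $A_N,B_N,D_N$ bookkeeping is a trivially equivalent reorganisation of the same ingredients, and your Radon--Nikodym identification $\ud\targjoint/\ud\propjoint[\adjfunc]=C^{-1}\wgtfunc$ is exactly the content of Lemma~\ref{lemma:KLD:CSD:identities}. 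The only point worth flagging is that the \emph{statement} of Lemma~\ref{lemma:cons:product:space} delivers self-normalised convergence, so for the unnormalised limits of $A_N,B_N,D_N$ you are really appealing to~\eqref{eq:first:crucial:convergence} from its proof (or, as you note, reproving it via a conditional-variance argument)---which is precisely what the paper does as well.
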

Next, it is shown that the adjustment weight function can be chosen so as to minimize the RHS of \eqref{eq:limiting-KL} and \eqref{eq:limiting-chitwo}.
\begin{proposition} \label{prop:chi2:optimal:adjfunc}
Assume {\bf(A\ref{hyp:hyp:cons:initial:sample}, A\ref{hyp:hyp:weight:function:assumption})}. Then the following holds.
\begin{enumerate}[(i)]
\item \label{item:optimal-adjustment-weight-KL}
If $\uk(\cdot, |\log \wgtfunc|) \in \mathsf{C} \cap \Lp{1}(\stsp, \nu)$, then
\begin{equation*}
\operatorname{arg\ min}_{\adjfunc}  \KL \left( \left. \targjoint  \right\|  \propjoint[\adjfunc] \right) \define \adjfuncoptKL \ \text{where}\   \adjfuncoptKL(\xi) \define  \uk(\xi,\stsptd) \eqsp.
\end{equation*}
\item \label{item:optimal-adjustemnt-weight-chi2}
If $\uk(\cdot, \wgtfunc) \in \mathsf{C}$, then
\begin{equation*}
\operatorname{arg\ min}_{\adjfunc} \chitwo \left( \left. \targjoint \right\|  \propjoint[\adjfunc] \right) \define \adjfuncoptchi \ 
\text{where} \ \adjfuncoptchi(\xi) \define \sqrt{ \int_{\stsptd} \frac{\ud \uk(\xi, \cdot)}{\ud \prop(\xi, \cdot)} (\tilde{\xi}) \, \uk (\xi, \ud \tilde{\xi})} \eqsp.
\end{equation*}
\end{enumerate}
\end{proposition}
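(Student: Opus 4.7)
Both minimisations rest on the observation that $\targjoint$ and $\propjoint[\adjfunc]$ share the same first-marginal support (up to adjustment reweighting) while their conditionals along $\tilde{\xi}$ decouple from $\adjfunc$. I will first compute the Radon--Nikodym derivative
\[
\frac{\ud \targjoint}{\ud \propjoint[\adjfunc]}(\xi, \tilde{\xi}) = \frac{\nu(\adjfunc)}{\nu \uk(\stsptd)} \cdot \frac{1}{\adjfunc(\xi)} \cdot \frac{\ud \uk(\xi, \cdot)}{\ud \prop(\xi, \cdot)}(\tilde{\xi}) \eqsp,
\]
which is well defined under \refhyp{hyp:weight:function:assumption} since $\prop$ dominates $\uk$ pointwise in the first argument. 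Denote for brevity $h(\xi) \define \uk(\xi, \stsptd)$ and $G(\xi) \define \int_{\stsptd} [\ud \uk(\xi, \cdot)/\ud \prop(\xi, \cdot)](\tilde{\xi}) \, \uk(\xi, \ud \tilde{\xi})$, so that $\adjfuncoptKL = h$ and $\adjfuncoptchi = \sqrt{G}$.

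\textbf{Step (i), the KL case.} I will use the chain rule for relative entropy. The first marginals of $\targjoint$ and $\propjoint[\adjfunc]$ on $(\stsp, \alg(\stsp))$ are $h \cdot \nu / \nu(h)$ and $\adjfunc \cdot \nu / \nu(\adjfunc)$ respectively, while the conditional distributions of $\tilde{\xi}$ given $\xi$ are $\uk(\xi, \cdot)/\uk(\xi, \stsptd)$ and $\prop(\xi, \cdot)$, neither of which involves $\adjfunc$. Hence
\[
\KL \left( \left. \targjoint \right\| \propjoint[\adjfunc] \right) = \KL \left( \left. \frac{h \cdot \nu}{\nu(h)} \right\| \frac{\adjfunc \cdot \nu}{\nu(\adjfunc)} \right) + C \eqsp,
\]
where $C$ is a constant independent of $\adjfunc$. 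The first term is non-negative by Gibbs' inequality, and equals zero iff the two marginals coincide, i.e.\ iff $\adjfunc \propto h$ $\nu$-a.s. Since $\propjoint[\adjfunc]$ depends on $\adjfunc$ only through its equivalence class up to positive scalar multiplication, the representative $\adjfunc = h = \uk(\cdot, \stsptd)$ is optimal. Technical care is needed to verify that $\log(\ud\targjoint/\ud\propjoint[\adjfunc])$ is $\targjoint$-integrable; this is exactly what the hypothesis $\uk(\cdot, |\log \wgtfunc|) \in \mathsf{C} \cap \Lp{1}(\stsp, \nu)$ secures after integrating out $\tilde{\xi}$.

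\textbf{Step (ii), the $\chi^2$ case.} Starting from $\chitwo(\targjoint \| \propjoint[\adjfunc]) = \int (\ud\targjoint/\ud\propjoint[\adjfunc])^2 \, \ud\propjoint[\adjfunc] - 1$, a direct computation using the explicit Radon--Nikodym derivative and the definition \eqref{eq:defpropjoint} of $\propjoint[\adjfunc]$ yields, after integrating out $\tilde{\xi}$ against $\prop(\xi, \cdot)$,
\[
\chitwo \left( \left. \targjoint \right\| \propjoint[\adjfunc] \right) + 1 = \frac{1}{\nu \uk(\stsptd)^2} \, \nu(\adjfunc) \cdot \nu(G/\adjfunc) \eqsp.
\]
The reduction to $G$ uses the identity $[\ud\uk/\ud\prop]^2 \, \ud\prop = [\ud\uk/\ud\prop] \, \ud\uk$. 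The remaining task is the unconstrained minimisation of $\adjfunc \mapsto \nu(\adjfunc) \cdot \nu(G/\adjfunc)$ over positive measurable $\adjfunc$. Applying the Cauchy--Schwarz inequality in $\Lp{2}(\nu)$ to the factorisation $\sqrt{G} = \sqrt{\adjfunc} \cdot \sqrt{G/\adjfunc}$ gives
\[
\bigl( \nu(\sqrt{G}) \bigr)^2 \leq \nu(\adjfunc) \cdot \nu(G/\adjfunc) \eqsp,
\]
with equality iff $\sqrt{\adjfunc}$ and $\sqrt{G/\adjfunc}$ are $\nu$-a.s. proportional, i.e.\ $\adjfunc \propto \sqrt{G}$; taking the canonical representative $\adjfuncoptchi = \sqrt{G}$ concludes.

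\textbf{Main obstacle.} The computations themselves are routine; the subtlety lies in the measure-theoretic bookkeeping---verifying the required absolute continuity and integrability of $\log \wgtfunc$ and $\wgtfunc$ so that the chain-rule decomposition and the Cauchy--Schwarz step are justified in full generality---and in noting that $\propjoint[\adjfunc]$ is invariant under positive rescaling of $\adjfunc$, so that the ``optimum'' is really an equivalence class, of which $\uk(\cdot, \stsptd)$ and $\sqrt{G}$ are natural representatives.
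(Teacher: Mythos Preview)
Your proposal is correct. Part~(i) is essentially the paper's argument: both use the chain rule for relative entropy to split $\KL(\targjoint\|\propjoint[\adjfunc])$ into a term that is the KLD between the first marginals $h\cdot\nu/\nu(h)$ and $\adjfunc\cdot\nu/\nu(\adjfunc)$ plus a remainder independent of $\adjfunc$, and then invoke nonnegativity of the KLD (Gibbs) to conclude that $\adjfunc\propto h=\uk(\cdot,\stsptd)$ is optimal.

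For part~(ii) the two arguments diverge slightly in packaging. After the common reduction to a one-dimensional integral in $\xi$, the paper rewrites the expression as
\[
\nu^2(g)\Bigl\{\chitwo\bigl(g\cdot\nu/\nu(g)\,\big\|\,\adjfunc\cdot\nu/\nu(\adjfunc)\bigr)\Bigr\}+\nu^2(g)-1
\]
(with $g=\sqrt{G}/\nu\uk(\stsptd)$ in your notation), so that the $\adjfunc$-dependent part is itself a chi-square distance between two probability densities on $\stsp$; its nonnegativity and equality case then give $\adjfunc\propto g\propto\sqrt{G}$. You instead minimise the product $\nu(\adjfunc)\,\nu(G/\adjfunc)$ directly via Cauchy--Schwarz. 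The two routes are equivalent---the nonnegativity of the chi-square distance is precisely a Cauchy--Schwarz (or Jensen) inequality---and both deliver the same minimiser and the same minimum value. The paper's formulation has the mild conceptual advantage of displaying the residual CSD explicitly, while your Cauchy--Schwarz step is arguably more elementary and self-contained.
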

It is worthwhile to notice that the optimal adjustment weights for the KLD do not depend on the proposal kernel $\prop$. The minimal value $\KL ( \left. \targjoint  \right\|  \propjoint[\adjfuncoptKL] )$ of the limiting KLD is the conditional relative entropy between $\targjoint$ and $\propjoint$.

In both cases, letting $\prop(\cdot, A) = \uk(\cdot, A) / \uk(\cdot, \stsptd)$ yields, as we may expect, the optimal adjustment multiplier weight function $\adjfuncoptKL(\cdot) = \adjfuncoptchi(\cdot) = \uk(\cdot, \stsptd)$, resulting in uniform importance weights $\wgttd{i} \equiv 1$.

It is possible to relate the asymptotic CSD \eqref{eq:limiting-chitwo} between $\auxtarg$ and $\auxinstr$ to the asymptotic variance of the estimator $\wgtsumtd^{-1} \sum_{i=1}^{\tilde{M}_N} \wgttd{i} f(\partitd{i})$ of an expectation $\mu(f)$ for a given integrable target function $f$. More specifically, suppose that $\tilde{M}_N / M_N \rightarrow \ell \in [0, \infty]$ as $N \rightarrow \infty$ and that the initial sample $\{ (\parti{N,i}, \wgt{N,i}) \}_{i = 1}^{M_N}$ satisfies, for all $f$ belonging to a given class $\mathsf{A}$ of functions, the central limit theorem
\begin{equation} \label{eq:ass:asymp:norm}
a_N \wgtsum^{-1} \sum_{i=1}^{M_N} \wgt{N,i} [f(\parti{N,i})-\mu(f)]
\stackrel{\mathcal{D}}{\longrightarrow} \mathcal{N}[0, \sigma^2(f)] \eqsp,
\end{equation}
where the sequence $\{ a_N \}_{N}$ is such that $a_N M_N \rightarrow \beta \in [0, \infty)$ as $N \rightarrow \infty$ and $\sigma : \mathsf{A} \rightarrow \R^+$ is a functional. Then the sample $\{ (\partitd{i}, \wgttd{i}) \}_{i = 1}^{M_N}$ produced in Algorithm~\ref{alg:auxiliary:importance:sampling} is, as showed in \cite[Theorem~3.2]{douc:moulines:olsson:2007}, asymptotically normal for a class of functions $\tilde{\mathsf{A}}$ in the sense that, for all $f \in \tilde{\mathsf{A}}$,
\[
\wgtsumtd^{-1} \sum_{i=1}^{\tilde{M}_N} \wgttd{i} [f(\partitd{i}) - \mu(f)]
\stackrel{\mathcal{D}}{\longrightarrow} \mathcal{N}\{ 0, \tilde{\sigma}^2[\adjfunc](f) \} \eqsp,
\]
where
\begin{equation*} 
\tilde{\sigma}^2[\adjfunc](f) = \sigma^2 \{ \uk[\cdot, f - \mu(f)] \} / [\nu \uk(\cdot, \stsptd)]^2  + \beta \ell^{-1} \nu(\adjfunc \prop \{\cdot, \wgtfunc^2[f - \mu(f)]^2 \} ) \nu(\adjfunc) / [\nu \uk(\stsptd)]^2
\end{equation*}
and, recalling the definition \eqref{eq:modulated-measure} of a modulated measure,
\begin{multline} \label{eq:as:var}
\nu(\adjfunc \prop \{\cdot, \wgtfunc^2[f - \mu(f)]^2 \} ) \nu(\adjfunc) / [\nu \uk(\stsptd)]^2 \\ = \mu^2(|f|) \, \chitwo\{ \targjoint[|f|] / \targjoint(|f|) || \propjoint \} 
\\ - 2 \mu(f) \mu(f_+^{1/2}) \, \chitwo\{ \targjoint[f_+^{1/2}] / \targjoint(f_+^{1/2}) || \propjoint \} \\ + 2 \mu(f) \mu(f_-^{1/2}) \, \chitwo\{ \targjoint[f_-^{1/2}] / \targjoint(f_-^{1/2}) || \propjoint \} \\ 
+ \mu^2(f) \, \chitwo( \targjoint || \propjoint ) + \mu^2(|f|) - \mu^2(f) 
\eqsp.
\end{multline}
Here $f_+ \define \max(f, 0)$ and $f_- \define \max(-f, 0)$ denote the positive and negative parts of $f$, respectively, and $\targjoint(|f|)$ refers to the expectation of the extended function $|f| : (\xi, \tilde{\xi}) \in \stsp \times \stsptd \mapsto |f(\tilde{\xi})| \in \R^+$ under $\targjoint$ (and similarly for $\targjoint(f_{\pm}^{1/2})$). From \eqref{eq:as:var} we deduce that decreasing $\chitwo( \targjoint || \propjoint )$ will imply a decrease of asymptotic variance if the discrepancy between $\targjoint$ and modulated measure $\targjoint[|f|] / \targjoint(|f|)$ is not too large, that is, we deal with a target function $f$ with a regular behavour in the support of $\targjoint(\stsp \times \cdot)$.

\section{Adaptive importance sampling}
\label{section:adaptive:importance:sampling}

\subsection{APF adaptation by minimisation of estimated KLD and CSD over a parametric family}
\label{subsec:adaptive:KLDCSD}

Assume that there exists a random noise variable $\epsilon$, having distribution $\distrn$ on some measurable space $(\boldsymbol{\Lambda}, \alg(\boldsymbol{\Lambda}))$, and a family $\{ F_\param \}_{\param \in \paramsp}$ of mappings from $\stsp \times \boldsymbol{\Lambda}$ to $\stsptd$ such that we are able to simulate $\tilde{\xi} \sim \prop_\param(\xi, \cdot)$, for $\xi \in \stsp$, by simulating $\epsilon \sim \distrn$ and letting $\tilde{\xi} = F_\param(\xi, \epsilon)$. We denote by $\wgtfunc_\param$ the importance weight function associated with $\prop_\param$, see~\eqref{eq:definition:weightfunction-nodens} and set $\wgtfunc_{\param} \circ F_{\param}(\xi, \epsilon) \define \wgtfunc_{\param} (\xi, F_{\param}(\xi, \epsilon))$.

Assume that \refhyp{hyp:cons:initial:sample} holds and suppose that we have simulated, as in the first step of Algorithm~\ref{alg:auxiliary:importance:sampling}, indices $\{ \ind{i} \}_{i = 1}^{\tilde{M}_N}$ and noise variables $\{ \noise{N,i} \}_{i = 1}^{\tilde{M}_N}\sim \distrn^{\otimes \tilde{M}_N}$. Now, keeping these indices and noise variables fixed, we can form an idea of how the KLD varies with $\param$ via the mapping $\param \mapsto \entropy(\{ \wgtfunc_{\param} \circ F_{\param}(\parti{N,\ind{i}}, \noise{N,i})  \}_{i = 1}^{\tilde{M}_N})$. Similarly, the CSD can be studied by using $\CV$ instead of $\entropy$. 
This suggests an algorithm in which
the particles are reproposed using $\prop_{\paramin}$, with
$\paramin = \operatorname{arg\ min}_{\param \in \paramsp}
\entropy(\{ \wgtfunc_{\param} \circ F_{\param}(\parti{N,\ind{i}},
\noise{N,i}) \}_{i = 1}^{\tilde{M}_N})$.

This procedure is summarised in Algorithm~\ref{alg:adaptive:auxiliary:importance:sampling}, and its modification for minimisation of the empirical CSD is straightforward.
\begin{algorithm}[ht]
\caption{Adaptive APF}
\label{alg:adaptive:auxiliary:importance:sampling}
\begin{algorithmic}[1]
\Require \refhyp{hyp:cons:initial:sample}
\State Draw $\{ \ind{i} \}_{i = 1}^{\tilde{M}_N} \sim
\mathcal{M}( \tilde{M}_N, \{ \wgt{N,j} \adj{j} / \sum_{\ell=1}^{M_N}
\wgt{N,\ell} \adj{\ell} \}_{j = 1}^{M_N})$,
\State simulate $\{ \noise{N,i} \}_{i = 1}^{\tilde{M}_N} \sim \distrn^{\otimes \tilde{M}_N}$,
\State
$\paramin \leftarrow \operatorname{arg\ min}_{\param \in \paramsp} \entropy(\{ \wgtfunc_{\param} \circ F_{\param}(\parti{N,\ind{i}}, \noise{N,i}) \}_{i = 1}^{\tilde{M}_N})$,
\State set
\[
\partitd{i} \stackrel{\forall i}{\leftarrow}
F_{\paramin}(\parti{N,\ind{i}}, \noise{N,i})
\]
\State update 
\[
\wgttd{i} \stackrel{\forall i}{\leftarrow}
\wgtfunc_{\paramin}(\parti{N,\ind{i}}, \partitd{i}) \eqsp,
\]
\State let $\{ (\partitd{i}, \wgttd{i}) \}_{i = 1}^{\tilde{M}_N}$ approximate $\mu$.
\end{algorithmic}
\end{algorithm}

\begin{remark}
\label{remark:threshold}
A slight modification of
Algorithm~\ref{alg:adaptive:auxiliary:importance:sampling}, lowering
the added computational burden, is to apply the adaptation mechanism only when the estimated KLD (or CSD) is above a chosen threshold.
\end{remark}

\begin{remark}
It is possible to establish a law of large numbers as well as a central limit theorem for the algorithm above, similarly to what has been done for the nonadaptive auxiliary particle filter in \cite{douc:moulines:olsson:2007} and  \cite{olsson:moulines:douc:2007}.

More specifically, suppose again that \eqref{eq:ass:asymp:norm} holds for similar 
$(\mathsf{A}, \beta, \sigma(\cdot))$ and that $\tilde{M}_N / M_N \rightarrow \ell \in [0, \infty]$ as $N \rightarrow \infty$. Then the sample $\{ (\partitd{i}, \wgttd{i}) \}_{i = 1}^{M_N}$ produced in Algorithm~\ref{alg:adaptive:auxiliary:importance:sampling} is asymptotically normal for a class of functions $\tilde{\mathsf{A}}$ in the sense that, for all $f \in \tilde{\mathsf{A}}$,
\[
\wgtsumtd^{-1} \sum_{i=1}^{\tilde{M}_N} \wgttd{i} [f(\partitd{i}) - \mu(f)]
\stackrel{\mathcal{D}}{\longrightarrow} \mathcal{N}[0, \tilde{\sigma}^2_{\param_\ast}(f)] \eqsp,
\]
where
\begin{equation*}
\tilde{\sigma}^2_{\param_\ast} (f) \define \beta \ell^{-1} \nu(\adjfunc \prop_{\param_\ast} \{\cdot, \wgtfunc^2_{\param_\ast}[f - \mu(f)]^2 \} ) \nu(\adjfunc) / [\nu \uk(\stsptd)]^2  +
\sigma^2( \uk \{\cdot, [f - \mu(f)] \} ) / [\nu \uk(\stsptd)]^2
\end{equation*}
and $\param_\ast$ minimises the asymptotic KLD. The complete proof of this result is however omitted for brevity.
\end{remark}

\subsection{APF adaptation by cross-entropy (CE) methods}
Here we construct an algorithm which selects a proposal kernel from a parametric family in a way that minimises the KLD between the instrumental mixture distribution and the target mixture $\auxtarg$ (defined in~\eqref{eq:definition:auxtarg-nodens}). Thus, recall that we are given an initial sample $\{(\parti{N,i}, \wgt{N,i})\}_{i = 1}^{M_N}$; we then use IS to approximate the target auxiliary distribution $\auxtarg$ by sampling from the instrumental auxiliary distribution
\begin{equation}
\label{eq:definition:auxinst-nodens:param}
\auxinstrparam{\param}(\{ i \} \times A) \define \frac{\wgt{N,i} \adj{i}}{\sum_{j=1}^{M_N} \wgt{N,j} \adj{j}} \prop[\param] (\parti{N,i}, A) \eqsp,
\end{equation}
which is a straightforward modification of~\eqref{eq:definition:auxinst-nodens} where $\prop$
is replaced by $\prop[\param]$, that is, a Markovian kernel from $(\stsp,\alg(\stsp))$ to
$(\stsptd,\alg(\stsptd))$ belonging to the parametric family
$\left\{\prop[\param](\xi, \cdot):  \xi \in  \stsp, \right.$ $\left.\vphantom{\prop[\param]} \param \in \paramsp\right\}$.

We aim at finding the parameter $\param^\ast$ which realises the minimum of $\param \mapsto \KL(\auxtarg||\auxinstrparam{\param})$ over the parameter space $\paramsp$,
where
\begin{equation}
\label{eq:program:minKLD}
\KL(\auxtarg||\auxinstrparam{\param})
= \sum_{i=1}^{M_N} \int_{\stsptd} \log\left( \frac{\ud \auxtarg}{\ud \auxinstrparam{\param}} (i, \tilde{\xi})\right)  \auxtarg(i, \ud \tilde{\xi}) \eqsp.
\end{equation}
Since the expectation in \eqref{eq:program:minKLD} is intractable in most cases,
the key idea is to approximate it iteratively using IS from more and more accurate 
approximations---this idea has been successfully used in CE methods; see e.g. 
\citet{rubinstein:kroese:2004}. At iteration $\ell$, denote by $\param[\ell] \in \paramsp$ the current fit of the parameter. Each iteration of the algorithm is split into two steps:
In the first step we sample, following
Algorithm~\ref{alg:auxiliary:importance:sampling} with $\tilde{M}_N =
\tilde{M}^\ell_N$ and $\prop = \prop[{\param[\ell]}]$, $M_N^\ell$ particles
$\{(\ind[\ell]{i}, \partitd[\ell]{i})\}_{i = 1}^{M_N^\ell}$ from $\auxinstrparam{\param[\ell]}$. Note that the adjustment multiplier weights are kept constant during the iterations, a limitation which is however not necessary. The second step consists in computing the exact solution 
\begin{align}
\label{eq:program:minKLDIS}
\param[\ell+1] \define \argmin_{\param \in \paramsp} \sum_{i=1}^{\tilde{M}^\ell_N}
\frac{\wgttd[\ell]{i}}{\wgtsumtd[\ell]}
\log\left( \frac{\ud \auxtarg}{\ud \auxinstrparam{\param}} (\ind[\ell]{i},\partitd[\ell]{i} )\right) 
\end{align}
to the problem of minimising the Monte Carlo approximation of~\eqref{eq:program:minKLD}. In the case where the kernels $\uk$ and $\prop[\param]$ have densities, denoted by $\ukdens$ and $\propdens[\param]$, respectively, \wrt\ a common reference measure on $(\stsptd, \alg(\stsptd))$, the minimisation program~\eqref{eq:program:minKLDIS} is equivalent to 
\begin{align}
\label{eq:program:minKLDmaxdens}
\param[\ell+1] \define \argmax_{\param \in \paramsp} \sum_{i=1}^{\tilde{M}^\ell_N}
\frac{\wgttd[\ell]{i}}{\wgtsumtd[\ell]}
\log \propdens[\param] (\parti{\ind[\ell]{i}},\partitd[\ell]{i}) \eqsp.
\end{align}
This algorithm is helpful only in situations where the minimisation problem \eqref{eq:program:minKLDIS} is sufficiently simple for allowing for closed-form minimisation;
this happens, for example, if the objective function is a convex combination of concave functions, whose minimum either admits a (simple) closed-form expression or is straightforward to minimise numerically. As mentioned in Section~\ref{subsec:AdaptiveImportanceSampling}, this is generally the case when the function $\propdens[\param](\xi, \cdot)$ belongs to an exponential family for any $\xi \in \stsp$.

Since this optimisation problem closely ressembles the Monte Carlo EM algorithm, all the
implementation details of these algorithms can be readily transposed to that context; see for example \cite{levine:casella:2001}, \cite{eickhof:zhu:amemiya:2004}, and \cite{levine:fan:2004}. Because we use very simple models, convergence occurs, as seen in Section~\ref{section:implementations:to:state:space:models}, within only few iterations. 
When choosing the successive particle sample sizes $\{\tilde{M}^{\ell}_N\}_{\ell=1}^L$, we are facing a trade-off between precision of the approximation \eqref{eq:program:minKLDIS} of~\eqref{eq:program:minKLD} and computational cost. Numerical
evidence typically shows that these sizes may, as high precision is less crucial here than when generating the final population from $\auxinstrparam{\param[L]}$, be relatively small compared to the final size $\tilde{M}_N$. Besides, it is possible (and even theoretically recommended) to increase the number of particles with the iteration index, since, heuristically, high accuracy is less required at the first steps. In the current implementation in Section~\ref{section:implementations:to:state:space:models}, we will show that fixing a priori the total number of iterations and using the same number $\tilde{M}^\ell_N = \tilde{M}_N / L$ of particles at each iteration may provide satisfactory results in a first run.

\begin{algorithm}[ht]
\caption{CE-based adaptive APF}
\label{alg:adaptive:CE}
\begin{algorithmic}[1]
\Require $\{ (\parti{i}, \wgt{i}) \}_{i=1}^{M_N}$ targets $\nu$.
\State Choose an arbitrary $\param[0]$,
\For {$\ell = 0, \ldots, L-1$} \Comment{More intricate criteria are sensible}
\State draw
\[
\{ \ind[\ell]{i} \}_{i = 1}^{\tilde{M}^\ell_N} \sim
\mathcal{M}( \tilde{M}^\ell_N, \{ \wgt{j} \adj{j} / \sum_{n=1}^{M_N}
\wgt{n} \adj{n} \}_{j = 1}^{M_N})\eqsp,
\]
\State simulate $\{ \partitd[\ell]{i} \}_{i = 1}^{\tilde{M}^\ell_N} \sim
\bigotimes_{i = 1}^{\tilde{M}^\ell_N} \prop_{\param[\ell]}(\parti{\ind[\ell]{i}}, \cdot)$,
\State update 
\[
\wgttd{i} \stackrel{\forall i}{\leftarrow}
\wgtfunc_{\param[\ell]}(\parti{\ind[\ell]{i}}, \partitd[\ell]{i}) \eqsp,
\]
\State compute, with available closed-form,
\[
\param[\ell+1] \define \argmin_{\param \in \paramsp} \sum_{i=1}^{\tilde{M}^\ell_N}
\frac{\wgttd[\ell]{i}}{\wgtsumtd[\ell]}
\log\left( \frac{\ud \auxtarg}{\ud \auxinstrparam{\param}} (\ind[\ell]{i},\partitd[\ell]{i} )\right) \eqsp,
\] 
\EndFor
\State run Algorithm~\ref{alg:auxiliary:importance:sampling} with $\prop = \prop_{\param[L]}$.
\end{algorithmic}
\end{algorithm}

\section{Application to state space models}
\label{section:implementations:to:state:space:models}

For an illustration of our findings we return to the framework of state space models in Section~\ref{subsec:sequential-Monte-Carlo} and apply the CE-adaptation-based particle method to \emph{filtering} in nonlinear state space models of type
\begin{equation} \label{eq:nonlinear:state:space:model}
\begin{split}
X_{k + 1} &= m(X_k) + \sigma_w(X_k)W_{k + 1} \eqsp, \quad k \geq 0 \eqsp, \\
Y_k &= X_k + \sigma_v V_k \eqsp, \quad k \geq 0 \eqsp,
\end{split}
\end{equation}
where the parameter $\sigma_v$ and the $\R$-valued functions $(m, \sigma_w)$ are known, and $\{ W_k \}_{k = 1}^\infty$ and $\{ V_k \}_{k = 0}^\infty$ are mutually independent sequences of independent standard normal-distributed variables. In this setting, we wish to approximate the filter distributions $\{ \post{k}{k} \}_{k \geq 0}$, defined in Section~\ref{subsec:sequential-Monte-Carlo} as the posterior distributions of $X_k$ given $Y_{0:k}$ (recall that $Y_{0:k} \define (Y_0, \ldots, Y_k)$), which in general lack closed-form expressions. For models of this type, the optimal weight and density defined in~\eqref{eq:optimal:adjfunc} and~\eqref{eq:optimal:proposalkernel}, respectively, can be expressed in closed-form:
\begin{equation}
\label{eq:optimal:adjfunc:gaussian}
\adjfunc_k^\ast(x)  = \normdens{Y_{k+1}}{m(x)}{\sqrt{\sigma_w^2(x) +  \sigma_v^2}} \eqsp,
\end{equation}
where $\normdens{x}{\mu}{\sigma} \define \exp(- (x-\mu)^2/(2 \sigma^2)) / \sqrt{2 \pi \sigma^2}$  and
\begin{equation}
\label{eq:optimal:proposalkernel:gaussian}
\propdens_k^\ast(x,x') = \normdens{x'}{\meanopt(x,Y_{k+1})}{\stdopt(x)} \eqsp,
\end{equation}
with
\begin{align*}
\meanopt(x,Y_{k+1}) &\define \frac{\sigma_w^2(x) Y_{k + 1}  +  \sigma_v^2 m(x)}{\sigma_w^2(x) +  \sigma_v^2}  \,,\\
\stdopt^2(x) &\define \frac{\sigma_w^2(x)   \sigma_v^2} {\sigma_w^2(x) +  \sigma_v^2}  \,.
\end{align*}
We may also compute the chi-square optimal adjustment multiplier weight function $\Psi^\ast_{\chi^2, \hk}$ when the prior kernel is used as proposal: at time $k$,
\begin{equation} \label{eq:optimal:weights:NL}
\Psi^\ast_{\chi^2,\hk}(x) \propto
\sqrt{\frac{2 \sigma_v^2}{2 \sigma^2_w(x) + \sigma_v^2}} 
\, \exp \left(  - \frac{Y_{k + 1}^2}{\sigma_v^2} + \frac{m(x)}{2 \sigma_w^2(x) + \sigma_v^2}[2 Y_{k + 1} - m(x)] \right)\,.
\end{equation}
We recall from Proposition~\ref{prop:chi2:optimal:adjfunc} that the optimal adjustment weight function for the KLD is given by $\Psi^\ast_{KL,\hk}(x) = \adjfunc_k^\ast(x)$.

In these intentionally chosen simple example we will consider, at each timestep $k$, adaption over the family 
\begin{equation}
	\left\{\prop_\param(x, \cdot)  \define \mathcal{N}(\meanopt(x,Y_{k+1}),
	\param \, \stdopt(x)) : x \in \R, \param > 0 \right\} 
\end{equation}
of proposal kernels. In addition, we keep the adjustment weights constant, that is $\adjfunc(x) = 1$.

The mode of each proposal kernel is centered at the mode of the optimal kernel, and the variance is proportional to the inverse of the Hessian of the optimal kernel at the mode. Let $r_{\param}(x, x') \define \normdens{x'}{\meanopt(x,
Y_{k+1})}{\theta\, \stdopt(x)}$ denote the density of $\prop_\param(x,\cdot)$ \wrt\
the Lebesgue measure. In this setting, at every timestep $k$, a closed-form expression of the KLD between
the target and proposal distributions is available:
\begin{equation}
\KL(\auxtarg || \auxinstrparam{\param}) = \sum_{i=1}^{M_N}  \frac{\wgt{N,i} \adjopt{i}}{\sum_{j=1}^{M_N} \wgt{N,j} \adjopt{j}} 
\left[
\log\left(\frac{\adjopt{i} \wgtsum}{\sum_{j=1}^{M_N} \wgt{N,j} \adjopt{j}}\right)
+ \log \theta  + \frac{1}{2} \left(\frac{1}{\param^2} - 1 \right)
\right] \eqsp,
\label{eq:gaussian:dkl}
\end{equation}
where we set $\adjopt{i} \define \adjfunc^\ast(\parti{N,i})$ and  $\wgtsum = \sum_{i=1}^{M_N} \wgt{N,i}$.

As we are scaling the optimal standard deviation, it is obvious that
\begin{align}
\param^\ast_N &\define \argmin_{\param > 0} \KL(\auxtarg ||
\auxinstrparam{\param}) = 1 \eqsp,
\label{eq:gaussian:optimalparam}
\end{align}
which may also be inferred by straightforward derivation of \eqref{eq:gaussian:dkl}
\wrt\ $\param$. This provides us with a reference to which the parameter values found by our algorithm can be compared. Note that the instrumental distribution $\auxinstrparam{\param^\ast_N}$ differs from the target distribution $\auxtarg$ by the adjustment weights used: recall that every instrumental distribution in the family considered has uniform adjustment
weights, $\adjfunc(x) = 1$, whereas the overall optimal proposal has, since it is equal to the target distribution $\auxtarg$, the optimal weights defined in~\eqref{eq:optimal:adjfunc:gaussian}. This entails that
\begin{equation}
	\KL(\auxtarg || \auxinstrparam{\param^\ast_N})  =
	\sum_{i=1}^{M_N} \wgt{N,i} \frac{\adjopt{i} }{\sum_{j=1}^{M_N} \wgt{N,j}
	\adjopt{j}} \log\left(\frac{\adjopt{i} \wgtsum}{\sum_{j=1}^{M_N} \wgt{N,j}
	\adjopt{j}}\right) \eqsp,
	\label{eq:gaussian:dkl:optfamily}
\end{equation}
which is zero if all the optimal weights are equal.

The implementation of Algorithm~\ref{alg:adaptive:CE} is straightforward as the optimisation program~\eqref{eq:program:minKLDmaxdens} has the following
closed-form solution:
\begin{equation}
	\param[\ell+1]  = \Bigg\{\sum_{i=1}^{M_N^{\ell}}
	\frac{\wgttd{i}^\param[\ell]}{\wgtsumtd^{\param[\ell]} \stdopt^2_{N,\ind{i}^{\param[\ell]}}}
	\left( \partitd[\param_N^{\ell}]{i} -
	\meanopt_{N,\ind{i}^\param[\ell]} \right)^2 \Bigg\}^{1/2} \eqsp,
	\label{eq:gaussian:updateCE}
\end{equation}
where $\meanopt_{N,i} \define \meanopt(\parti{N,i}, Y_{k+1})$ and
$\stdopt^2_{N,i} \define \stdopt^2(\parti{N,i})$.
This is a typical case where the family of proposal kernels allows for efficient
minimisation. Richer families sharing this property may also be used, but here we are voluntarily willing to keep this toy example as simple as possible.

We will study the following special case of the model \eqref{eq:nonlinear:state:space:model}:
\begin{equation*}
m(x) \equiv 0, \quad \sigma_w(x) = \sqrt{\beta_0 + \beta_1 x^2} \eqsp.
\end{equation*}
This is the classical Gaussian \emph{autoregressive conditional heteroscedasticity} (ARCH) \emph{model} observed in noise (see \cite{bollerslev:engle:nelson:1994}). In this case an experiment was conducted where we compared:
\begin{enumerate}
\item[(i)] a plain nonadaptive particle filter
for which $\adjfunc \equiv 1$, that is, the bootstrap particle filter
of \cite{gordon:salmond:smith:1993},
\item[(ii)] an auxiliary filter based on the prior kernel and chi-square optimal weights $\Psi^\ast_{\chi^2, \hk}$,
\item[(iii)] adaptive bootstrap filters with uniform adjustment multiplier weights using numerical minimisation of the empirical CSD and
\item[(iv)] the empirical KLD (Algorithm~\ref{alg:adaptive:auxiliary:importance:sampling}),
\item[(v)] an adaptive bootstrap filter using direct minimisation of $\KL(\auxtarg||\auxinstrparam{\param})$, see~\eqref{eq:gaussian:optimalparam},
\item[(vi)] a CE-based adaptive bootstrap filter, and as a reference,
\item[(vi)] an optimal auxiliary particle filter, i.e. a filter using the optimal weight and proposal kernel defined in \eqref{eq:optimal:adjfunc:gaussian} and~\eqref{eq:optimal:proposalkernel:gaussian}, respectively.
\end{enumerate}

This experiment was conducted for the parameter set $(\beta_0, \beta_1, \sigma_v^2) = (1, 0.99, 10)$, yielding (since $\beta_1 < 1$) a geometrically ergodic ARCH(1) model \citep[see][Theorem~1]{chen:chen:2000}; the noise variance $\sigma^2_v$ is equal to $1/10$ of the stationary variance, which here is equal to $\sigma_s^2 = \beta_0/(1-\beta_1)$, of the state process.

In order to design a challenging test of the
adaptation procedures we set, after having run a hundred burn-in iterations to reach
stationarity of the hidden states, the observations to be constantly equal to $Y_k = 6 \sigma_s$ for every $k \geq 110$. We expect that the bootstrap filter, having a proposal transition
kernel with constant mean $m(x) = 0$, will have a large mean square error (MSE) due a poor number of
particles in regions where the likelihood is significant. We aim at illustrating that the adaptive
algorithms, whose transition kernels have the same mode as the optimal
transition kernel, adjust automatically the variance of the proposals to that of the optimal kernel and reach performances comparable to that of the optimal auxiliary filter.

For these observation records, Figure~\ref{fig:ARCH:MSE} displays MSEs estimates based on $500$ filter means. Each filter used $5,\!000$ particles. The reference values used for the MSE estimates were obtained using the optimal auxiliary particle filter with as many as $500,\!000$ particles. This also provided a set from which the initial particles of every filter were drawn,
hence allowing for initialisation at the filter distribution a few steps before the outlying observations.

The CE-based filter of algorithm~\ref{alg:adaptive:CE} was implemented in its most
simple form, with the inside loop using a constant number
of $M_N^\ell = N/10 = 500$ particles and only $L = 5$ iterations: a simple prefatory
study of the model indicated that the Markov chain $\{\param[\ell]\}_{l \geq
0}$ stabilised around the value reached in the very first step. We set
$\param[0] = 10$ to avoid initialising at the optimal value.

It can be seen in Figure~\ref{fig:ARCH:MSE:fixedseeds} that using the CSD optimal
weights combined with the prior kernel as proposal does not improve on the plain bootstrap filter, precisely because the observations were chosen in such a way that the prior kernel was helpless. On the contrary, Figures~\ref{fig:ARCH:MSE:fixedseeds}
and~\ref{fig:ARCH:MSE:CE} show that the adaptive schemes
perform exactly similarly to the optimal filter: they all success in finding
the optimal scale of the standard deviation, and using uniform adjustment
weights instead of optimal ones does not impact much.

We observe clearly a change of regime, beginning at step $110$, corresponding to the outlying constant
observations. The adaptive filters recover from the
changepoint in one timestep, whereas the bootstrap filter needs several.
More important is that the adaptive filters (as well as the optimal one)
reduce, in the regime of the outlying observations, the MSE of the bootstrap filter by a factor $10$.

Moreover, for a comparison with fixed simulation budget, we ran a bootstrap
filter with $3 N = 15,\!000$ particles
This corresponds to the same simulation budget as the CE-based
adaptive scheme with $N$ particles, which is, in this setting, the fastest of our adaptive algorithms. In our setting, the CE-based filter is measured to expand the plain
bootstrap runtime by a factor $3$, although a basic study of algorithmic
complexity shows that this factor should be closer to $\sum_{\ell=1}^{L} M_N^{\ell} / N = 1.5$---the difference rises from Matlab benefitting from the vectorisation of the plain bootstrap filter, not from the iterative nature of the CE.

The conclusion drawn from Figure~\ref{fig:ARCH:MSE:CE} is that for an equal runtime,
the adaptive filter outperforms, by a factor $3.5$, the bootstrap filter using even three times more particles.

\begin{figure}[htbp]
\centering
\subfigure[Auxiliary filter based on chi-square optimal weights $\Psi^\ast_{\chi^2,\hk}$ and prior kernel $K$ ($\circ$), adaptive filters minimising the empirical KLD ($\ast$) and CSD ($\times$), and reference filters listed below.]{%
\includegraphics[width=.43\textwidth]{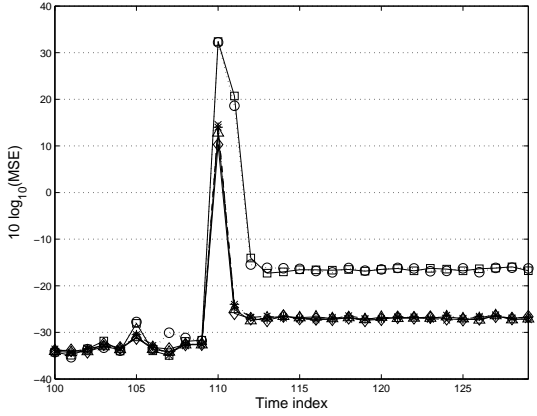}%
\label{fig:ARCH:MSE:fixedseeds}%
}\hfill
\subfigure[CE-based adaption ($\triangle$, dash-dotted line), bootstrap filter with $3 N$ particles ($\square$, dashed line), and reference filters listed below.]{%
\includegraphics[width=.43\textwidth]{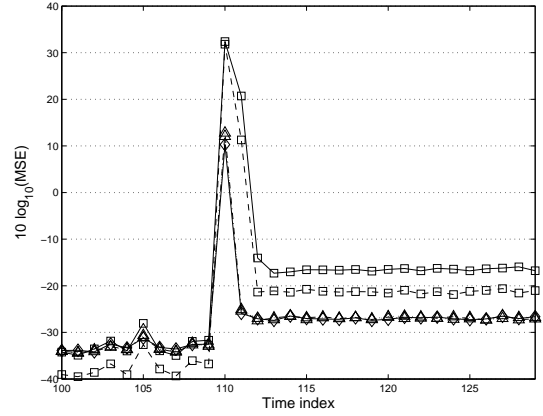}%
\label{fig:ARCH:MSE:CE}%
}
\caption{Plot of MSE performances (on log-scale) on the ARCH model with $(\beta_0, \beta_1, \sigma_v^2) = (1, 0.99, 10)$. Reference filters common to both plots are: the bootstrap filter ($\square$, continuous line), the optimal filter with weights $\Psi^\ast$ and proposal kernel density $\propdens^\ast$ ($\Diamond$), and a bootstrap filter using a proposal with parameter $\param^\ast_N$ minimising the current KLD ($\triangle$, continuous line). The MSE values are computed using $N = 5,\!000$ particles---except for the reference bootstrap using $3 N$ particles ($\square$, dashed line)---and $1,\!000$ runs of each algorithm. }
\label{fig:ARCH:MSE}
\end{figure}

\appendix


\section{Proofs}
\label{section:proofs}

\subsection{Proof of Theorems~\ref{th:KL:chi2:convergence} and~\ref{th:entropy:CV:convergence}}

We preface the proofs of Theorems~\ref{th:KL:chi2:convergence} and~\ref{th:entropy:CV:convergence} with the following two lemmata.

\begin{lemma} \label{lemma:KLD:CSD:identities}
Assume \refhyp{hyp:weight:function:assumption}. Then the following identities hold.
\begin{itemize}
\item[{\it i)}]
	\parbox{1\textwidth}{\[
\KL \left( \left. \targjoint  \right\|  \propjoint[\adjfunc] \right) = \nu \otimes \uk\{\log[ \wgtfunc \nu(\adjfunc) / \nu \uk(\stsptd)] \} / \nu \uk(\stsptd) \eqsp,
\]}
\item[{\it ii)}]\parbox{1\textwidth}{\[
\chitwo \left( \left. \targjoint  \right\| \propjoint[\adjfunc] \right) = \nu(\adjfunc)\, \nu \otimes \uk(\wgtfunc) / [\nu \uk(\stsptd)]^2 - 1 \eqsp.
\]}
\end{itemize}
\end{lemma}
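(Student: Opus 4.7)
The plan is to reduce both identities to a single computation of the Radon--Nikodym derivative $\ud \targjoint/\ud \propjoint[\adjfunc]$ on the product space and then substitute into the definitions of KLD and CSD. The key observation is that, by the very definitions in \eqref{eq:deftargjoint} and \eqref{eq:defpropjoint}, $\targjoint$ and $\propjoint[\adjfunc]$ are rescalings of the outer products $\nu \otimes \uk$ and $\nu[\adjfunc] \otimes \prop$, so all work reduces to computing $\ud(\nu \otimes \uk)/\ud(\nu[\adjfunc] \otimes \prop)$.

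First I would observe that, for any $\xi \in \stsp$, \refhyp{hyp:weight:function:assumption} gives $\ud \uk(\xi, \cdot)/\ud \prop(\xi, \cdot)(\tilde{\xi}) = \adjfunc(\xi)\, \wgtfunc(\xi,\tilde{\xi})$ by \eqref{eq:definition:weightfunction-nodens}, and that $\ud\nu/\ud\nu[\adjfunc](\xi) = 1/\adjfunc(\xi)$ on $\{\adjfunc > 0\}$. Multiplying these Radon--Nikodym derivatives and using the elementary identity $\ud(\gamma \otimes T) / \ud(\gamma' \otimes T')(\xi,\tilde{\xi}) = [\ud\gamma/\ud\gamma'(\xi)] \cdot [\ud T(\xi,\cdot)/\ud T'(\xi,\cdot)(\tilde{\xi})]$ (which follows by Fubini applied to any rectangle in $\alg(\stsp)\otimes\alg(\stsptd)$) yields
\[
\frac{\ud(\nu\otimes\uk)}{\ud(\nu[\adjfunc]\otimes\prop)}(\xi,\tilde{\xi}) = \wgtfunc(\xi,\tilde{\xi}) \eqsp.
\]
Combining with the normalisation constants in \eqref{eq:deftargjoint}--\eqref{eq:defpropjoint} gives the master identity
\[
\frac{\ud \targjoint}{\ud \propjoint[\adjfunc]}(\xi,\tilde{\xi}) = \wgtfunc(\xi,\tilde{\xi}) \, \frac{\nu(\adjfunc)}{\nu\uk(\stsptd)} \eqsp.
\]

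For part (i), substitute this into $\KL(\targjoint\|\propjoint[\adjfunc]) = \int \log[\ud\targjoint/\ud\propjoint[\adjfunc]]\,\ud\targjoint$ and pull the normalising constant out of the logarithm's argument as needed; rewriting the integration against $\targjoint$ as $[\nu\uk(\stsptd)]^{-1}$ times integration against $\nu\otimes\uk$ gives exactly the claimed expression. For part (ii), use the equivalent form $\chitwo(\targjoint\|\propjoint[\adjfunc]) = \int [\ud\targjoint/\ud\propjoint[\adjfunc]] \,\ud\targjoint - 1$ (cf.\ \eqref{eq:definition:chitwo-2}), plug in the master identity, and again rewrite the integral against $\targjoint$ in terms of $\nu\otimes\uk$. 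The factor $\nu(\adjfunc)/\nu\uk(\stsptd)$ from the derivative combined with the $1/\nu\uk(\stsptd)$ from the change of reference measure produces the quotient $\nu(\adjfunc)\,\nu\otimes\uk(\wgtfunc)/[\nu\uk(\stsptd)]^2$.

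The only delicate point is the Radon--Nikodym computation on the product space, in particular the justification that $\targjoint \ll \propjoint[\adjfunc]$. This is where I expect the main (albeit minor) obstacle: one needs $\wgtfunc$ to be well-defined $\propjoint[\adjfunc]$-a.e., which is implicit in \refhyp{hyp:weight:function:assumption} through the existence of $\ud\uk(\xi,\cdot)/\ud\prop(\xi,\cdot)$, together with $\adjfunc > 0$ $\nu$-a.e.\ on the support relevant for the integrals appearing in the statement. Once this absolute continuity is in hand, everything else is bookkeeping and follows from Fubini's theorem.
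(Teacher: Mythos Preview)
Your proposal is correct and follows essentially the same approach as the paper: both arguments boil down to establishing the master identity $\ud\targjoint/\ud\propjoint[\adjfunc] = \wgtfunc\,\nu(\adjfunc)/\nu\uk(\stsptd)$ and then substituting into the definitions of $\KL$ and $\chitwo$. The only cosmetic difference is that the paper introduces a common reference measure $\nu\otimes\prop$ and writes separate densities $q(\xi,\xi')$ and $p(\xi)$ for $\targjoint$ and $\propjoint[\adjfunc]$ (so that $\ud\targjoint/\ud\propjoint[\adjfunc]=p^{-1}q$), whereas you compute the Radon--Nikodym derivative directly via the product rule for outer products; the two routes are equivalent.
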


\begin{proof}
We denote by $q(\xi, \xi')$ the Radon-Nikodym derivative of the probability measure $\targjoint$ with respect to $\nu \otimes \prop$ (where the outer product $\otimes$ of a measure and a kernel is defined in \eqref{eq:outer-product-measure-kernel}), that is,
\begin{equation}
\label{eq:density-targjoint}
q(\xi, \xi') \define \frac{ \frac{\ud \uk(\xi,\cdot)}{\ud \prop(\xi,\cdot)}(\xi')}{\iint_{\stsp \times \stsptd} \nu(\ud \xi) \, \uk (\xi, \ud \xi')} \eqsp,
\end{equation}
and by $p(\xi)$ the Radon-Nikodym derivative of the probability measure $\propjoint$ with respect to $\nu \otimes \prop$:
\begin{equation}
\label{eq:density-propjoint}
p(\xi) = \frac{\adjfunc(\xi)}{\nu(\adjfunc)} \eqsp.
\end{equation}
Using the notation above and definition \eqref{eq:definition:weightfunction-nodens} of the weight function $\wgtfunc$, we have
\[
\frac{\wgtfunc(\xi, \xi') \nu(\adjfunc)}{\nu \uk(\stsptd)} = \frac{\nu(\adjfunc) \frac{\ud \uk(\xi, \cdot)}{\ud \prop(\xi,\cdot)} (\xi')}{\adjfunc(\xi) \nu \uk(\stsptd)} = p^{-1}(\xi) q(\xi, \xi') \eqsp.
\]
This implies that
\begin{align*}
\KL \left( \left. \targjoint  \right\|  \propjoint[\adjfunc] \right) &= \iint_{\stsp \times \stsptd} \nu(\ud \xi) \, \prop(\xi, \ud \xi') q(\xi, \xi') \log \left(p^{-1}(\xi) q(\xi, \xi') \right)  \\ &= \nu \otimes \uk\{\log[ \wgtfunc \nu(\adjfunc) / \nu \uk(\stsptd)] \} / \nu \uk(\stsptd) \eqsp,
\end{align*}
which establishes assertion {\it i)}. Similarly, we may write
\begin{align*}
\chitwo \left( \left. \targjoint  \right\| \propjoint[\adjfunc] \right)  
&=  \iint_{\stsp \times \stsptd} \nu(\ud \xi) \, \prop(\xi, \ud \xi') p^{-1}(\xi) q^2(\xi,\xi') - 1 \\
&= \frac{\iint_{\stsp \times \stsptd} \nu(\adjfunc) \, \nu(\ud \xi) \, \prop(\xi, \ud \xi')  \left[ \frac{\ud \uk(\xi,\cdot)}{\ud \prop (\xi,\cdot)} (\xi') \right]^2 \adjfunc^{-1}(\xi)}{[\nu \uk(\stsptd)]^2} - 1 \\
&= \nu(\adjfunc)\, \nu \otimes \uk(\wgtfunc) / [\nu \uk(\stsptd)]^2 - 1 \eqsp,
\end{align*}
showing assertion {\it ii)}.
\end{proof}

\begin{lemma}
\label{lemma:cons:product:space}
Assume {\bf(A\ref{hyp:hyp:cons:initial:sample}, A\ref{hyp:hyp:weight:function:assumption})} and let $\mathsf{C}^\ast \define \{ f \in \measfunc(\stsp \times \stsptd) : \uk(\cdot, |f|) \in \mathsf{C} \cap \Lp{1}(\stsp, \nu) \}$. Then, for all $f \in \mathsf{C}^\ast$, as $N \rightarrow \infty$,
\[
\wgtsumtd^{-1} \sum_{i = 1}^{\tilde{M}_N} \wgttd{i} f(\parti{N,\ind{i}}, \partitd{i}) \stackrel{\prob}{\longrightarrow} \nu  \otimes \uk (f) / \nu \uk(\stsptd) 
\]
\end{lemma}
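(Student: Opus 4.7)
The plan is to use Slutsky's theorem after separately establishing the convergence in probability of the numerator and denominator of the ratio. Concretely, it suffices to prove that for any $h \in \mathsf{C}^\ast$,
\[
\tilde{M}_N^{-1} \sum_{i=1}^{\tilde{M}_N} \wgttd{i} h(\parti{N,\ind{i}}, \partitd{i}) \stackrel{\prob}{\longrightarrow} \frac{\nu \otimes \uk(h)}{\nu(\adjfunc)} \eqsp.
\]
Applying this to $h = f$ and to the constant $h \equiv 1$ (which lies in $\mathsf{C}^\ast$ since $\uk(\cdot, \stsptd) \in \mathsf{C}$ by \refhyp{hyp:weight:function:assumption}) and forming the ratio eliminates $\nu(\adjfunc)$ and yields the stated limit.

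To establish the displayed convergence I would condition on $\mathcal{F}_N \define \sigma(\{(\parti{N,i}, \wgt{N,i})\}_{i=1}^{M_N})$, so that $\{(\ind{i}, \partitd{i})\}_{i=1}^{\tilde{M}_N}$ are i.i.d.\ with distribution $\auxinstr$. Using the identity $\wgtfunc(\xi, \tilde{\xi}) \, \prop(\xi, \ud \tilde{\xi}) = \adjfunc^{-1}(\xi) \, \uk(\xi, \ud \tilde{\xi})$, immediate from \eqref{eq:definition:weightfunction-nodens}, a direct computation gives
\[
\E\bigl[\wgttd{1} h(\parti{N,\ind{1}}, \partitd{1}) \bigm| \mathcal{F}_N\bigr] = \frac{\wgtsum^{-1} \sum_j \wgt{N,j} g_h(\parti{N,j})}{\wgtsum^{-1} \sum_\ell \wgt{N,\ell} \adj{\ell}},
\]
where $g_h(\xi) \define \uk(\xi, h(\xi, \cdot))$. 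Since $|g_h| \leq \uk(\cdot, |h|) \in \mathsf{C}$ and $\mathsf{C}$ is a proper linear space, $g_h \in \mathsf{C}$; \refhyp{hyp:cons:initial:sample} applied to $g_h$ and to $\adjfunc \in \mathsf{C}$ (from \refhyp{hyp:weight:function:assumption}) then yields that this conditional mean converges in probability to $\nu(g_h) / \nu(\adjfunc) = \nu \otimes \uk(h) / \nu(\adjfunc)$.

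The main obstacle is the concentration step, namely showing that
\[
\tilde{M}_N^{-1} \sum_i \wgttd{i} h(\parti{N,\ind{i}}, \partitd{i}) - \E\bigl[\wgttd{1} h(\parti{N,\ind{1}}, \partitd{1}) \bigm| \mathcal{F}_N\bigr] \stackrel{\prob}{\longrightarrow} 0,
\]
which is delicate because the hypotheses do not grant finite conditional second moments of $\wgttd{1} h$. I plan to handle this by a truncation argument: for $c > 0$, decompose each summand according to whether $|\wgttd{i} h(\parti{N,\ind{i}}, \partitd{i})| \leq c$. On the bounded set the summands are dominated by $c$, so a conditional Chebyshev inequality yields that the corresponding centered sample mean vanishes in probability as $\tilde{M}_N \to \infty$ (which is implicit in \refhyp{hyp:cons:initial:sample}). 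The tail part has conditional $\Lp{1}$ norm equal to $\wgtsum^{-1} \sum_j \wgt{N,j} \uk(\parti{N,j}, |h| \indic_{\{\wgtfunc |h| > c\}}) / (\wgtsum^{-1} \sum_\ell \wgt{N,\ell} \adj{\ell})$. Since $\uk(\cdot, |h| \indic_{\{\wgtfunc |h| > c\}}) \in \mathsf{C}$ for each $c$ by properness, \refhyp{hyp:cons:initial:sample} gives in-probability convergence of this ratio to $\nu(\uk(\cdot, |h| \indic_{\{\wgtfunc |h| > c\}})) / \nu(\adjfunc)$, and dominated convergence (with $\uk(\cdot, |h|) \in \Lp{1}(\stsp, \nu)$ as envelope) drives the latter to $0$ as $c \to \infty$. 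A standard joint limit in $(N, c)$ closes the argument, and Slutsky's theorem delivers the stated convergence.
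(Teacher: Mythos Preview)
Your proposal is correct and follows the same skeleton as the paper's proof: reduce to the unnormalised average \eqref{eq:first:crucial:convergence}, show its conditional mean converges via consistency of the initial sample, and then handle concentration by a truncation argument. The one notable difference is the conditioning $\sigma$-algebra: you condition on the \emph{pre-resampling} sample $\sigma(\{(\parti{N,i},\wgt{N,i})\}_{i=1}^{M_N})$, so that the pairs $(\ind{i},\partitd{i})$ are conditionally i.i.d.\ from $\auxinstr$, whereas the paper conditions on the \emph{selected} particles $\sigma(\{\parti{N,\ind{i}}\}_{i=1}^{\tilde{M}_N})$, under which the $\partitd{i}$ are merely conditionally independent (not identically distributed). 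The paper therefore first has to transfer consistency from the weighted sample to the resampled one (invoking Theorems~1 and~2 of \cite{douc:moulines:2008}) before applying the triangular-array law of large numbers (their Theorem~11). Your choice collapses these two stages into one and makes the concentration step a genuine i.i.d.\ computation, which is a modest but real simplification; the price is that you reprove by hand the content of Theorem~11 via truncation, whereas the paper simply cites it. One small quibble: $\tilde{M}_N\to\infty$ is not a consequence of \refhyp{hyp:cons:initial:sample} (which concerns $M_N$), but the paper tacitly assumes it as well, so this is not a defect peculiar to your argument.
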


\begin{proof}
It is enough to prove that
\begin{equation} \label{eq:first:crucial:convergence}
\tilde{M}_N^{-1} \sum_{i = 1}^{\tilde{M}_N} \wgttd{i} f(\parti{N,\ind{i}}, \partitd{i})
\stackrel{\prob}{\longrightarrow} \nu \otimes \uk(f) / \nu(\adjfunc) \eqsp, 
\end{equation}
for all $f \in \mathsf{C}^\ast$; indeed, since the function $f \equiv 1$ belongs to $\mathsf{C}^\ast$ under \refhyp{hyp:weight:function:assumption}, the result of the lemma will follow from \eqref{eq:first:crucial:convergence} by Slutsky's theorem. Define the measure $\varphi(A) \define \nu (\adjfunc \indic_A) / \nu(\adjfunc)$, with $A \in \alg(\stsp)$. By applying Theorem~1 in \cite{douc:moulines:2008} we conclude that the weighted sample $\{ (\parti{N,i}, \adj{i}) \}_{i = 1}^{M_N}$ is consistent for $(\varphi, \{ f\in \Lp{1}(\stsp, \varphi)  : \adjfunc |f| \in \mathsf{C} \} )$. Moreover, by Theorem~2 in the same paper this is also true for the uniformly weighted sample $\{ (\parti{N,\ind{i}}, 1) \}_{i = 1}^{\tilde{M}_N}$ (see the proof of Theorem~3.1 in \cite{douc:moulines:olsson:2007} for details). By definition, for $f \in \mathsf{C}^\ast$, $\varphi \otimes \prop(\wgtfunc |f|) \, \nu(\adjfunc) = \nu \otimes \uk(|f|) < \infty$ and $\adjfunc \prop(\cdot, \wgtfunc |f|) = \uk(\cdot, |f|) \in \mathsf{C}$. Hence, we conclude that $\prop(\cdot, \wgtfunc |f|)$ and thus $\prop(\cdot, \wgtfunc f)$ belong to the proper set $\{ f\in \Lp{1}(\stsp, \varphi)  : \adjfunc |f| \in \mathsf{C} \}$. This implies the convergence
\begin{multline} \label{eq:exp:conv}
\tilde{M}_N^{-1} \sum_{i = 1}^{\tilde{M}_N} \E \left[ \left. \wgttd{i} f(\parti{N,\ind{i}}, \partitd{i}) \right| \mathcal{F}_N \right]  = \tilde{M}_N^{-1}
\sum_{i = 1}^{\tilde{M}_N} \prop(\parti{N,\ind{i}}, \wgtfunc f) \\ \stackrel{\prob}{\longrightarrow} \varphi \otimes \prop(\wgtfunc f) = \nu \otimes \uk(f) / \nu(\adjfunc) \eqsp, \end{multline}
where $\mathcal{F}_N \define \sigma(\{ \parti{N,\ind{i}} \}_{i =
  1}^{\tilde{M}_N})$ denotes the $\sigma$-algebra generated by the
selected particles. It thus suffices to establish that
\begin{equation} \label{eq:crucial:convergence}
\tilde{M}_N^{-1} \sum_{i = 1}^{\tilde{M}_N} \left\{ \E \left[ \left. \wgttd{i} f(\parti{N,\ind{i}}, \partitd{i}) \right| \mathcal{F}_N \right] - \wgttd{i} f(\parti{N,\ind{i}}, \partitd{i}) \right\} \stackrel{\prob}{\longrightarrow}
0 \eqsp,
\end{equation}
and we do this, following the lines of the proof of Theorem~1 in \cite{douc:moulines:2008}, by verifying the two conditions of Theorem~11 in the same work. The sequence $$\left\{ \tilde{M}_N^{-1} \sum_{i = 1}^{\tilde{M}_N} \E \left[ \left. \wgttd{i} |f(\parti{N,\ind{i}}, \partitd{i})| \right| \mathcal{F}_N \right] \right\}_N$$ is tight since it tends to $\nu \otimes \uk(|f|)/\nu(\adjfunc)$ in probability (cf. \eqref{eq:exp:conv}). Thus, the first condition is satisfied. To verify the second condition, take $\epsilon > 0$ and consider, for any $C > 0$, the
decomposition
\begin{multline*} \label{eq:decomp}
\tilde{M}_N^{-1} \sum_{i = 1}^{\tilde{M}_N} \E \left[ \left. \wgttd{i}
    |f(\parti{N,\ind{i}}, \partitd{i})| \indic_{\{ \wgttd{i} |f(\parti{N,\ind{i}}, \partitd{i})| \geq
      \epsilon \}} \right| \mathcal{F}_N \right] \\ \leq
\tilde{M}_N^{-1} \sum_{i = 1}^{\tilde{M}_N} \prop
\left(\parti{N,\ind{i}}, \wgtfunc |f| \indic_{\{ \wgtfunc
    |f| \geq C \}} \right)  + \indic_{ \{ \epsilon
  \tilde{M}_N < C \}} \tilde{M}_N^{-1} \sum_{i = 1}^{\tilde{M}_N} \E
\left[ \left. \wgttd{i} |f(\parti{N,\ind{i}}, \partitd{i})| \right| \mathcal{F}_N
\right] \eqsp.
\end{multline*}
Since $\prop(\cdot, \wgtfunc f)$ belongs to the proper set
$\{ f\in \Lp{1}(\stsp, \varphi) : \adjfunc |f| \in \mathsf{C} \}$,
so does the function $\prop(\cdot, \wgtfunc |f| \indic {\{ \wgtfunc |f| \geq C \}})$. Thus,
since the indicator $\indic \{ \epsilon \tilde{M}_N < C \}$ tends to zero, we conclude that the upper bound above has the limit
$\varphi \otimes \prop(\wgtfunc |f| \indic {\{ \wgtfunc |f| \geq C \}})$; however, by
dominated convergence this limit can be made arbitrarily small by
increasing $C$. Hence
\begin{equation*}
\tilde{M}_N^{-1} \sum_{i = 1}^{\tilde{M}_N} \E \left[ \left. \wgttd{i} |f(\parti{N,\ind{i}}, \partitd{i})| \indic_{\{ \wgttd{i} |f(\parti{N,\ind{i}}, \partitd{i})| \geq \epsilon \}} \right| \mathcal{F}_N \right]
 \stackrel{\prob}{\longrightarrow} 0 \eqsp,
\end{equation*}
which verifies the second condition of Theorem~11 in \cite{douc:moulines:2008}. Thus, \eqref{eq:crucial:convergence} follows.
\end{proof}

\begin{proof}[Proof of Theorem~\ref{th:KL:chi2:convergence}]
We start with {\it i)}. In the light of Lemma~\ref{lemma:KLD:CSD:identities} we establish the limit
\begin{equation} \label{eq:KL:limit}
\KL(\auxtarg || \auxinstr) \stackrel{\prob}{\longrightarrow} \nu \otimes \uk\{ \log[ \wgtfunc \nu(\adjfunc) / \nu \uk(\stsptd)] \} / \nu \uk(\stsptd) 
\eqsp,
\end{equation}
as $N \rightarrow \infty$. Hence, recall the definition (given in Section~\ref{section:theoretical:results}) of the KLD and write, for any index $m \in \{1, \ldots, \tilde{M}_N\}$,
\begin{multline} \label{eq:KL:developed}
\KL(\auxtarg || \auxinstr)  = \sum_{i = 1}^{M_N} \E_{\auxtarg}
\left[ \left. \log \wgtfunc(\parti{N,\ind{m}}, \partitd{m}) \right|
  \ind{m} = i \right] \, \auxtarg(\{ i \} \times \stsptd) \\ + \log
\left[ \frac{\sum_{j=1}^{M_N} \wgt{N,j} \adj{j}}{ \sum_{\ell = 1}^{M_N}
    \wgt{N,\ell} \uk(\parti{N,\ell}, \stsptd)} \right] \eqsp,
\end{multline}
where $\E_{\auxtarg}$ denotes the expectation associated with the random measure $\auxtarg$. For each term of the sum in \eqref{eq:KL:developed} we have
\begin{equation*}
\E_{\auxtarg} \left[ \left. \log \wgtfunc(\parti{N,\ind{m}},
    \partitd{m}) \right| \ind{m} = i \right]  \, \auxtarg(\{ i \}
\times \stsptd)  = \frac{\wgt{N,i} \uk(\parti{N,i}, \log \wgtfunc)}{\sum_{j
    = 1}^{M_N} \wgt{N,i} \uk{}(\parti{N,j}, \stsptd)} \eqsp,
\end{equation*}
and by using the consistency of $\{(\parti{N,i}, \wgt{N,i})\}_{i =
  1}^{M_N}$ (under \refhyp{hyp:cons:initial:sample}) we obtain the limit
\begin{equation*}
\sum_{i = 1}^{M_N} \E_{\auxtarg} \left[ \left. \log
    \wgtfunc(\parti{N,\ind{m}}, \partitd{m}) \right| \ind{m} = i \right]
\, \auxtarg(\{ i \} \times \stsptd) \stackrel{\prob}{\longrightarrow}
\nu \otimes \uk(\log \wgtfunc) / \nu \uk(\stsptd) \eqsp,
\end{equation*}
where we used that $\uk(\cdot, |\log \wgtfunc|) \in \mathsf{C}$ by assumption, implying, since $\mathsf{C}$ is proper, $\uk(\cdot, \log \wgtfunc) \in \mathsf{C}$. Moreover, under \refhyp{hyp:weight:function:assumption}, by the continuous mapping theorem,
\[
\log \left[ \frac{\sum_{j=1}^{M_N} \wgt{N,j} \adj{j}}{ \sum_{\ell =
      1}^{M_N} \wgt{N,\ell} \uk(\parti{N,\ell}, \stsptd)} \right]
\stackrel{\prob}{\longrightarrow} \log[\nu(\adjfunc)/\nu \uk(\stsptd)]
\eqsp,
\]
yielding
\begin{multline*}
\KL(\auxtarg || \auxinstr) \stackrel{\prob}{\longrightarrow} \nu \otimes \uk (\log \wgtfunc) / \nu \uk (\stsptd) + \log[\nu(\adjfunc) / \nu \uk(\stsptd)] \\ = \nu \otimes \uk\{\log[ \wgtfunc \nu(\adjfunc) / \nu \uk(\stsptd)] \} / \nu \uk(\stsptd) \eqsp,
\end{multline*}
which establishes \eqref{eq:KL:limit} and, consequently, \emph{i)}.

To prove \emph{ii)} we show that
\begin{equation} \label{eq:CV:limit}
\chitwo(\auxtarg || \auxinstr) \stackrel{\prob}{\longrightarrow}
\nu(\adjfunc) \, \nu \otimes \uk(\wgtfunc) / [\nu \uk(\stsptd)]^2 - 1 
\end{equation}
and apply Lemma~\ref{lemma:KLD:CSD:identities}. Thus, recall the definition of the CSD and write, for any index $m \in \{1, \ldots, \tilde{M}_N\}$,
\begin{align*}
\chitwo(\auxtarg || \auxinstr) & = \E_{\auxtarg} \left[ \frac{\ud
    \auxtarg}{\ud \auxinstr}(\parti{N,\ind{m}}, \partitd{m}) \right] - 1
\\& = \sum_{i = 1}^{M_N} \E_{\auxtarg} \left[ \left. \frac{\ud
      \auxtarg}{\ud \auxinstr}(\parti{N,\ind{m}}, \partitd{m}) \right|
  \ind{m} = i \right] \, \auxtarg(\{ i \} \times \stsptd)  - 1 \eqsp.
\end{align*}
Here
\begin{multline*}
\E_{\auxtarg} \left[ \left. \frac{\ud \auxtarg}{\ud
      \auxinstr}(\parti{N,\ind{m}}, \partitd{m}) \right| \ind{m} = i
\right]  \, \auxtarg(\{ i \} \times \stsptd) \\ = \wgt{N,i}
\uk(\parti{N,i}, \wgtfunc) \left[ \sum_{j = 1}^{M_N} \wgt{N,i}
  \uk{}(\parti{N,j}, \stsptd) \right]^{-2} \sum_{j = 1}^{M_N} \wgt{N,i}
\adj{i} \eqsp,
\end{multline*}
and using the consistency of $\{(\parti{N,i}, \wgt{N,i})\}_{i = 1}^{M_N}$ yields the limit
\begin{equation*}
\sum_{i = 1}^{M_N} \E_{\auxtarg} \left[ \left. \frac{\ud \auxtarg}{\ud \auxinstr}(\parti{N,\ind{m}}, \partitd{m}) \right| \ind{m} = i \right] \, \auxtarg(\{ i \} \times \stsptd)  \stackrel{\prob}{\longrightarrow} \nu(\adjfunc) \nu \otimes \uk(\wgtfunc) / [\nu \uk(\stsptd)]^2 \eqsp.
\end{equation*}
which proves \eqref{eq:CV:limit}. This completes the proof of \emph{ii)}.
\end{proof}

\begin{proof}[Proof of Theorem~\ref{th:entropy:CV:convergence}]
Applying directly Lemma~\ref{lemma:cons:product:space} for $f = \log \wgtfunc$ (which belongs to $\mathsf{C}^\ast$ by assumption) and the limit \eqref{eq:first:crucial:convergence} for $f \equiv 1$ yields, by the continuous mapping theorem,
\begin{multline*}
\entropy( \{ \wgttd{i} \}_{i = 1}^{\tilde{M}_N}) = \wgtsumtd^{-1} \sum_{i = 1}^{\tilde{M}_N} \wgttd{i} \log \wgttd{i} + \log(\tilde{M}_N \wgtsumtd^{-1}) \\ \stackrel{\prob}{\longrightarrow} \nu \otimes \uk (\log \wgtfunc) / \nu \uk (\stsptd) + \log[\nu(\adjfunc) / \nu \uk(\stsptd)] \\   = \nu \otimes \uk\{\log[ \wgtfunc \nu(\adjfunc) / \nu \uk(\stsptd)] \} / \nu \uk(\stsptd) \eqsp.
\end{multline*}
Now, we complete the proof of assertion \emph{i)} by applying Lemma~\ref{lemma:KLD:CSD:identities}.

We turn to \emph{ii)}. Since $\wgtfunc$ belongs to $\mathsf{C}^\ast$ by assumption, we obtain, by applying Lemma~\ref{lemma:cons:product:space} together with \eqref{eq:first:crucial:convergence},
\begin{multline}
\CV( \{ \wgttd{i} \}_{i = 1}^{\tilde{M}_N} ) = (\tilde{M}_N
\wgtsumtd^{-1}) \wgtsumtd^{-1} \sum_{i = 1}^{\tilde{M}_N} \wgttd{i}^2 - 1
\\ \stackrel{\prob}{\longrightarrow} \nu_{\mathrm{KL}} (\adjfunc) \define \nu(\adjfunc) \, \nu \otimes \uk(\wgtfunc) /
[\nu \uk(\stsptd)]^2 -1 \eqsp.
\end{multline}
From this \emph{ii)} follows via Lemma~\ref{lemma:KLD:CSD:identities}.
\end{proof}

\subsection{Proof of Proposition~\ref{prop:chi2:optimal:adjfunc}}
\label{section:corollary:proof}
Define by $q(\xi) \define \int_{\stsptd} \prop(\xi, \ud \xi') q(\xi,\xi')$ the marginal density of the measure on $(\stsp,\alg(\stsp))$,  $A \in \alg(\stsp) \mapsto \targjoint( A \times \stsptd)$.
We denote by  $q(\xi'|\xi)= q(\xi,\xi')/q(\xi)$ the conditional distribution. By the chain rule of the entropy,
(the entropy of a pair of random variables is the entropy of one plus the conditional entropy of the other),
we may split the KLD between $\targjoint$ and $\propjoint$ as follows,
\begin{equation*}
\KL(\auxtarg || \auxinstr) = \int_{\stsp} \nu( \ud \xi) q(\xi) \log(p^{-1}(\xi) q(\xi)) +
\iint_{\stsp \times \stsptd} \nu( \ud \xi) \prop(\xi, \ud \xi') q(\xi,\xi') \log q(\xi|\xi') \eqsp.
\end{equation*}
The second term in the RHS of the previous equation does not depend on the adjustment multiplier weight $\adjfunc$.
The first term is canceled if we set $p=q$, \ie\ if
\[
\frac{\adjfunc(\xi)}{\nu(\adjfunc)} = \int_\stsptd \prop(\xi, \ud \xi') q(\xi,\xi') = \frac{\uk(\xi,\stsptd)}{\int_\stsp \nu(\ud \xi) \uk(\xi,\stsptd)} \eqsp,
\]
which establishes assertion \emph{i)}.

Consider now assertion \emph{ii)}. 
Note first that
\begin{align}
\label{eq:decomposition-CSD}
&\iint_{\stsp \times \stsptd} \nu(\ud \xi) \, \prop(\xi, \ud \xi') p^{-1}(\xi) q^2(\xi, \xi') - 1  \nonumber\\
&\qquad = \int_\stsp \nu(\ud \xi) \, p^{-1}(\xi) g^2(\xi) -1 \nonumber\\
&\qquad= \nu^2(g) \left\{ \int_\stsp \nu(\ud \xi)  \frac{g^2(\xi)}{p(\xi) \nu^2(g)} - 1 \right\} + \nu^2(g) - 1 \eqsp,
\end{align}
where
\begin{equation*}
\label{eq:definition-g}
g^2(\xi)= \int_\stsptd \prop(\xi, \ud \xi') \, q^2(\xi, \xi') \eqsp.
\end{equation*}
The first term on the RHS of \eqref{eq:decomposition-CSD} is the CSD between the probability distributions associated with the densities $g / \nu(g)$ and $\adjfunc / \nu(\adjfunc)$ with respect to $\nu$. The second term does not depend on $\adjfunc$ and the optimal value of the adjustment multiplier weight is obtained by canceling the first term. This establishes assertion \emph{ii)}.

\section*{Acknowledgements}
The authors are grateful to Prof. Paul Fearnhead for encouragements and useful recommandations, and to the anonymous reviewers for insightful comments and suggestions that improved the presentation of the paper.

\bibliographystyle{apalike}      
\bibliography{AdapSMC_tech_rep_rev.bib}   

\begin{thebibliography}{}

\bibitem[Anderson and Moore, 1979]{anderson:moore:1979}
Anderson, B. D.~O. and Moore, J.~B. (1979).
\newblock {\em Optimal Filtering}.
\newblock Prentice-Hall.

\bibitem[Andrieu et~al., 2003]{andrieu:davy:doucet:2003}
Andrieu, C., Davy, M., and Doucet, A. (2003).
\newblock Efficient particle filtering for jump {M}arkov systems. {A}pplication
  to time-varying autoregressions.
\newblock {\em IEEE Trans. Signal Process.}, 51(7):1762--1770.

\bibitem[Arouna, 2004]{arouna:RM:2004}
Arouna, B. (2004).
\newblock Robbins-monro algorithms and variance reduction in finance.
\newblock {\em Journal of Computational Finance}, 7(2).

\bibitem[Bollerslev et~al., 1994]{bollerslev:engle:nelson:1994}
Bollerslev, T., Engle, R.~F., and Nelson, D. (1994).
\newblock {ARCH} models.
\newblock In Engle, R.~F. and McFadden, D., editors, {\em Handbook of
  Econometrics}, pages 2959--3038. North-Holland.

\bibitem[Capp\'e et~al., 2008]{cappe:douc:guillin:marin:robert:2008}
Capp\'e, O., Douc, R., Guillin, A., Marin, J.~M., and Robert, C.~P. (2008).
\newblock Adaptive importance sampling in general mixture classes.
\newblock {\em Statistics and Computing}, this issue.

\bibitem[Capp\'{e} et~al., 2005]{cappe:moulines:ryden:2005}
Capp\'{e}, O., Moulines, E., and Ryd\'{e}n, T. (2005).
\newblock {\em Inference in Hidden {M}arkov Models}.
\newblock Springer.

\bibitem[Carpenter et~al., 1999]{carpenter:clifford:fearnhead:1999}
Carpenter, J., Clifford, P., and Fearnhead, P. (1999).
\newblock An improved particle filter for non-linear problems.
\newblock {\em IEE Proc. Radar Sonar Navig.}, 146:2--7.

\bibitem[Chen and Chen, 2000]{chen:chen:2000}
Chen, M. and Chen, G. (2000).
\newblock Geometric ergodicity of nonlinear autoregressive models with changing
  conditional variances.
\newblock {\em The Canadian Journal of Statistics}, 28(3):605--613.

\bibitem[Cover and Thomas, 1991]{cover:thomas:1991}
Cover, T.~M. and Thomas, J.~A. (1991).
\newblock {\em Elements of Information Theory}.
\newblock Wiley.

\bibitem[de~Boer et~al., 2005]{deBoer:kroese:mannor:rubinstein:2005}
de~Boer, P.-T., Kroese, D.~P., Mannor, S., and Rubinstein, R.~Y. (2005).
\newblock A tutorial on the cross-entropy method.
\newblock {\em Ann. Oper. Res.}, 134:19--67.

\bibitem[Douc and Moulines, 2008]{douc:moulines:2008}
Douc, R. and Moulines, E. (2008).
\newblock Limit theorems for weighted samples with applications to
  sequential{M}onte {C}arlo.
\newblock {\em Ann. Statist.}, 36.
\newblock To appear.

\bibitem[Douc et~al., 2008]{douc:moulines:olsson:2007}
Douc, R., Moulines, E., and Olsson, J. (2008).
\newblock On the auxiliary particle filter.
\newblock {\em Probab. Math. Statist.}
\newblock To appear.

\bibitem[Doucet et~al., 2001]{doucet:defreitas:gordon:2001}
Doucet, A., {De Freitas}, N., and Gordon, N., editors (2001).
\newblock {\em Sequential {M}onte {C}arlo Methods in Practice}.
\newblock Springer, New York.

\bibitem[Doucet et~al., 2000]{doucet:godsill:andrieu:2000}
Doucet, A., Godsill, S., and Andrieu, C. (2000).
\newblock On sequential {M}onte-{C}arlo sampling methods for {B}ayesian
  filtering.
\newblock {\em Stat. Comput.}, 10:197--208.

\bibitem[Eickhoff et~al., 2004]{eickhof:zhu:amemiya:2004}
Eickhoff, J.~C., Zhu, J., and Amemiya, Y. (2004).
\newblock On the simulation size and the convergence of the {M}onte {C}arlo
  {EM} algorithm via likelihood-based distances.
\newblock {\em Statist. Probab. Lett.}, 67(2):161--171.

\bibitem[Evans and Swartz, 1995]{evans:swartz:1995}
Evans, M. and Swartz, T. (1995).
\newblock Methods for approximating integrals in {S}tatistics with special
  emphasis on {B}ayesian integration problems.
\newblock {\em Statist. Sci.}, 10:254--272.

\bibitem[Fearnhead, 2008]{fearnhead:2008}
Fearnhead, P. (2008).
\newblock Computational methods for complex stochastic systems: a review of
  some alternatives to mcmc.
\newblock {\em Stat. Comput.}, 18:151--171.

\bibitem[Fearnhead and Liu, 2007]{fearnhead:liu:2007}
Fearnhead, P. and Liu, Z. (2007).
\newblock On-line inference for multiple changepoint problems.
\newblock {\em J. Roy. Statist. Soc. Ser. B}, 69(4):590--605.

\bibitem[Fort and Moulines, 2003]{fort:moulines:2003}
Fort, G. and Moulines, E. (2003).
\newblock Convergence of the {M}onte {C}arlo expectation maximization for
  curved exponential families.
\newblock {\em Ann. Statist.}, 31(4):1220--1259.

\bibitem[Fox, 2003]{fox:2003}
Fox, D. (2003).
\newblock Adapting the sample size in particle filters through {KLD}-sampling.
\newblock {\em Int. J. Rob. Res.}, 22(11):985--1004.

\bibitem[Geweke, 1989]{geweke:1989}
Geweke, J. (1989).
\newblock {B}ayesian inference in econometric models using {M}onte-{C}arlo
  integration.
\newblock {\em Econometrica}, 57(6):1317--1339.

\bibitem[Givens and Raftery, 1996]{givens:raftery:1996}
Givens, G. and Raftery, A. (1996).
\newblock Local adaptive importance sampling for multivariate densities with
  strong nonlinear relationships.
\newblock {\em J. Amer. Statist. Assoc.}, 91(433):132--141.

\bibitem[Gordon et~al., 1993]{gordon:salmond:smith:1993}
Gordon, N., Salmond, D., and Smith, A.~F. (1993).
\newblock Novel approach to nonlinear/non-{G}aussian {B}ayesian state
  estimation.
\newblock {\em IEE Proc. F, Radar Signal Process.}, 140:107--113.

\bibitem[Ho and Lee, 1964]{ho:lee:1964}
Ho, Y.~C. and Lee, R. C.~K. (1964).
\newblock A {B}ayesian approach to problems in stochastic estimation and
  control.
\newblock {\em IEEE Trans. Automat. Control}, 9(4):333--339.

\bibitem[Hu et~al., 2008]{hu:schon:ljung:2008}
Hu, X.-L., Schon, T.~B., and Ljung, L. (2008).
\newblock A basic convergence result for particle filtering.
\newblock {\em IEEE Trans. Signal Process.}, 56(4):1337--1348.

\bibitem[H{\"u}rzeler and K{\"u}nsch, 1998]{huerzeler:kuensch:1998}
H{\"u}rzeler, M. and K{\"u}nsch, H.~R. (1998).
\newblock {M}onte {C}arlo approximations for general state-space models.
\newblock {\em J. Comput. Graph. Statist.}, 7:175--193.

\bibitem[Kailath et~al., 2000]{kailath:sayed:hassibi:2000}
Kailath, T., Sayed, A., and Hassibi, B. (2000).
\newblock {\em Linear Estimation}.
\newblock Prentice-Hall.

\bibitem[Kong et~al., 1994]{kong:liu:wong:1994}
Kong, A., Liu, J.~S., and Wong, W. (1994).
\newblock Sequential imputation and {B}ayesian missing data problems.
\newblock {\em J. Am. Statist. Assoc.}, 89(278-288):590--599.

\bibitem[K\"{u}nsch, 2005]{kuensch:2005}
K\"{u}nsch, H.~R. (2005).
\newblock Recursive {M}onte-{C}arlo filters: algorithms and theoretical
  analysis.
\newblock {\em Ann. Statist.}, 33(5):1983--2021.

\bibitem[Legland and Oudjane, 2006]{legland:oudjane:2006}
Legland, F. and Oudjane, N. (2006).
\newblock A sequential algorithm that keeps the particle system alive.
\newblock Technical report, Rapport de recherche 5826, INRIA.

\bibitem[Levine and Casella, 2001]{levine:casella:2001}
Levine, R.~A. and Casella, G. (2001).
\newblock Implementations of the {M}onte {C}arlo {EM} algorithm.
\newblock {\em J. Comput. Graph. Statist.}, 10(3):422--439.

\bibitem[Levine and Fan, 2004]{levine:fan:2004}
Levine, R.~A. and Fan, J. (2004).
\newblock An automated ({M}arkov chain) {M}onte {C}arlo {EM} algorithm.
\newblock {\em J. Stat. Comput. Simul.}, 74(5):349--359.

\bibitem[Liu, 2004]{liu:2004}
Liu, J. (2004).
\newblock {\em {M}onte {Carlo} strategies in scientific computing}.
\newblock Springer.

\bibitem[Oh and Berger, 1992]{oh:berger:1992}
Oh, M.-S. and Berger, J.~O. (1992).
\newblock Adaptive importance sampling in {M}onte {C}arlo integration.
\newblock {\em J. Statist. Comput. Simulation}, 41(3-4):143--168.

\bibitem[Oh and Berger, 1993]{oh:berger:1993}
Oh, M.-S. and Berger, J.~O. (1993).
\newblock Integration of multimodal functions by {M}onte {C}arlo importance
  sampling.
\newblock {\em J. Amer. Statist. Assoc.}, 88(422):450--456.

\bibitem[Olsson et~al., 2007]{olsson:moulines:douc:2007}
Olsson, J., Moulines, E., and Douc, R. (2007).
\newblock Improving the performance of the two-stage sampling particle filter:
  a statistical perspective.
\newblock In {\em Proceedings of the IEEE/SP 14th Workshop on Statistical
  Signal Processing}, pages 284--288, Madison, USA.

\bibitem[Pitt and Shephard, 1999]{pitt:shephard:1999}
Pitt, M.~K. and Shephard, N. (1999).
\newblock Filtering via simulation: Auxiliary particle filters.
\newblock {\em J. Am. Statist. Assoc.}, 94(446):590--599.

\bibitem[Ristic et~al., 2004]{ristic:arulampalam:gordon:2004}
Ristic, B., Arulampalam, M., and Gordon, A. (2004).
\newblock {\em Beyond Kalman Filters: Particle Filters for Target Tracking}.
\newblock Artech House.

\bibitem[Rubinstein and Kroese, 2004]{rubinstein:kroese:2004}
Rubinstein, R.~Y. and Kroese, D.~P. (2004).
\newblock {\em The Cross-Entropy Method}.
\newblock Springer.

\bibitem[Shen et~al., 2004]{shen:dick:brooks:2004}
Shen, C., van~den Hengel, A., Dick, A., and Brooks, M.~J. (2004).
\newblock Enhanced importance sampling: unscented auxiliary particle filtering
  for visual tracking.
\newblock In {\em AI 2004: Advances in artificial intelligence}, volume 3339 of
  {\em Lecture Notes in Comput. Sci.}, pages 180--191. Springer, Berlin.

\bibitem[Shephard and Pitt, 1997]{shephard:pitt:1997}
Shephard, N. and Pitt, M. (1997).
\newblock Likelihood analysis of non-{G}aussian measurement time series.
\newblock {\em Biometrika}, 84(3):653--667.
\newblock Erratum in volume 91, 249--250, 2004.

\bibitem[Soto, 2005]{soto:2005}
Soto, A. (2005).
\newblock Self adaptive particle filter.
\newblock In Kaelbling, L.~P. and Saffiotti, A., editors, {\em Proceedings of
  the 19th International Joint Conferences on Artificial Intelligence (IJCAI)},
  pages 1398--1406, Edinburgh, Scotland.

\bibitem[Stephens and Donnelly, 2000]{stephens:donnelly:2000}
Stephens, M. and Donnelly, P. (2000).
\newblock Inference in molecular population genetics.
\newblock {\em J. R. Stat. Soc. Ser. B Stat. Methodol.}, 62(4):605--655.
\newblock With discussion and a reply by the authors.

\bibitem[Straka and Simandl, 2006]{straka:simandl:2006}
Straka, O. and Simandl, M. (2006).
\newblock Particle filter adaptation based on efficient sample size.
\newblock In {\em Proceedings of the 14th IFAC Symposium on System
  Identification}, pages 991--996, Newcastle, Australia.

\bibitem[Van~der Vaart, 1998]{vandervaart:1998}
Van~der Vaart, A.~W. (1998).
\newblock {\em Asymptotic Statistics}.
\newblock Cambridge University Press.

\bibitem[Wei and Tanner, 1991]{wei:tanner:1991}
Wei, G. C.~G. and Tanner, M.~A. (1991).
\newblock A {M}onte-{C}arlo implementation of the {EM} algorithm and the poor
  man's {D}ata {A}ugmentation algorithms.
\newblock {\em J. Am. Statist. Assoc.}, 85:699--704.

\end{thebibliography}
\end{document}